\documentclass[11pt]{article}
\usepackage{rpmacros}
\usepackage[dvipsnames,svgnames,x11names]{xcolor}
\usepackage{amsthm}
\usepackage{xfrac}
\usepackage[T1]{fontenc}
\usepackage{enumerate}
\usepackage{verbatim}
\usepackage{booktabs}

\allowdisplaybreaks

\usepackage{thmtools}
\usepackage[colorlinks,pagebackref]{hyperref}
\hypersetup{
  linkcolor=[rgb]{0, 0, 0.6},
  citecolor=[rgb]{0, 0.4, 0},
  urlcolor=[rgb]{0.6, 0, 0},
  linktocpage=true,
  hypertexnames=false
}
\usepackage{mathrsfs,bbm}
\usepackage{todonotes}
\usepackage{tikz}
\usetikzlibrary{decorations.pathreplacing,backgrounds}
\usepackage{multirow}
\usepackage{setspace}
\usepackage[ruled, linesnumbered, vlined]{algorithm2e}
\SetKwComment{Comment}{/* }{ */}
\SetKwFunction{Prune}{Prune}
\SetKwFunction{FastDESolver}{FastDESolver}
\SetKwFunction{FastFESolver}{FastFESolver}
\SetKwFunction{ShortVector}{ShortVector}
\SetKwFunction{ConstructBasis}{ConstructBasis}
\SetKwFunction{UMultListRecovery}{UMultListRecovery}
\let\oldnl\nl
\newcommand{\nonl}{\renewcommand{\nl}{\let\nl\oldnl}}

\usepackage{amsmath}

\renewcommand{\epsilon}{\varepsilon}
\newcommand{\eps}{\varepsilon}
\renewcommand{\tilde}{\widetilde}
\newcommand{\Enc}{\mathsf{Enc}}

\newif\ifdraft
\draftfalse
\ifdraft
\else
\fi

\newcommand{\ehref}[1]{\href{mailto:#1}{#1}}

\newcommand{\parhead}[1]{\medskip \noindent {\bfseries \ignorespaces {#1}}\hskip 0.9em}

\newcommand{\smcdef}{\sum_{j=0}^{\ell}c_j \alpha^j}
\newcommand{\smudef}{\sum_{j=0}^{\ell}u_j \alpha^j}
\newcommand{\ignore}[1]{}

\DeclareMathOperator{\spn}{span}

\renewcommand{\le}{\leqslant}
\renewcommand{\leq}{\leqslant}
\renewcommand{\ge}{\geqslant}
\renewcommand{\geq}{\geqslant}

\usepackage[nameinlink,capitalise]{cleveref}
\crefname{lemma}{Lemma}{Lemmas}
\crefname{theorem}{Theorem}{Theorems}
\crefname{definition}{Definition}{Definitions}
\crefname{claim}{Claim}{Claims}
\crefname{table}{Table}{Tables}

\newtheorem{theorem}{Theorem}[section]
\newtheorem{lemma}[theorem]{Lemma}

\newtheorem{corollary}[theorem]{Corollary}

\theoremstyle{definition}

\newtheorem{definition}[theorem]{Definition}
\newtheorem{remark}[theorem]{Remark}

\newcommand{\out}{\mathrm{out}}
\newcommand{\Cout}{\calC_{\mathrm{out}}}
\newcommand{\Eout}{\Enc_{\mathrm{out}}}
\newcommand{\Eoutj}[1]{\Enc_{\mathrm{out},#1}}
\newcommand{\Cin}{\calC_{\mathrm{in}}}
\newcommand{\Ein}{\Enc_{\mathrm{in}}}
\newcommand{\Einl}[1]{\Enc_{\mathrm{in},#1}}
\newcommand{\kin}{k_{\mathrm{in}}}
\newcommand{\CAEL}{\calC_{\mathrm{AEL}}}

\newcommand{\EAELi}[1]{\Enc_{\mathrm{AEL},#1}}

\newcommand{\Vleft}{V_{\mathrm{left}}}
\newcommand{\Vright}{V_{\mathrm{right}}}

\newcommand{\wt}{\widetilde}

\usepackage{nth}
\usepackage{intcalc}
\usepackage{etoolbox}
\usepackage{xstring}

\usepackage{ifpdf}
\ifpdf
\else
\usepackage[quadpoints=false]{hypdvips}
\fi

\newcommand{\ECCCurl}[2]{http://eccc.hpi-web.de/report/\ifnumcomp{#1}{>}{93}{19}{20}#1/#2/}

\newcommand{\shortECCC}[2]{\texttt{\href{http://eccc.hpi-web.de/report/\ifnumcomp{#1}{>}{93}{19}{20}#1/#2/}{eccc:TR#1-#2}}}

\newcommand{\parseECCC}[1]{
\StrSubstitute{#1}{TR}{}[\tmpstring]%
\IfSubStr{\tmpstring}{/}{ 
\StrBefore{\tmpstring}{/}[\ecccyear]%
\StrBehind{\tmpstring}{/}[\ecccreport]%
}{
\StrBefore{\tmpstring}{-}[\ecccyear]%
\StrBehind{\tmpstring}{-}[\ecccreport]%
}%
\shortECCC{\ecccyear}{\ecccreport}}

\usepackage[letterpaper,margin=1.1in]{geometry}
\title{Explicit Constant-Alphabet Subspace Design Codes} 

\author{
{Rohan Goyal\thanks{Massachusetts Institute of Technology, Cambridge \ehref{rohan\_g@mit.edu}.}}
\and 
{Venkatesan Guruswami\thanks{University of California, Berkeley \ehref{venkatg@berkeley.edu}.}  } \and 
{Jun-Ting Hsieh \thanks{Massachusetts Institute of Technology, Cambridge \ehref{juntingh@mit.edu}.}}
}

\date{April 16, 2026}

\begin{document}

\maketitle

\thispagestyle{empty}

\begin{abstract}
The subspace design property for additive codes is a higher-dimensional generalization of the minimum distance property.
As shown recently \cite{brakensiek2025random}, it implies that the code has similar performance as random linear codes with respect to all ``local properties''. 
Explicit algebraic codes, such as folded Reed-Solomon and multiplicity codes, are known to have the subspace design property, but they need alphabet sizes that grow as a large polynomial in the block length.
Constructing explicit constant-alphabet subspace design codes was subsequently posed as an open question in \cite{brakensiek2025random}.

In this work, we answer their question and give explicit constructions of subspace design codes over constant-sized alphabets, using the expander-based Alon-Edmonds-Luby (AEL) framework. This generalizes the recent work of \cite{jeronimo2025AEL}, which showed that such codes share local properties of random linear codes.
Our work obtains this consequence in a unified manner via the subspace design property. In addition, our approach yields some improvements in parameters for list-recovery.

\end{abstract}
\newpage

\tableofcontents
\thispagestyle{empty}

\newpage

\section{Introduction}

In coding theory, a fundamental goal is to understand the trade-offs between different parameters of a code, such as its rate, distance, alphabet size, as well as error-tolerance properties like list-decodability and list-recoverability.
In most parameter regimes, the best known trade-offs are obtained via probabilistic arguments showing that some random ensemble of codes meet the desired bounds.
These existential results provide a benchmark for what is information-theoretically possible.

For many applications, however, one often seeks explicit deterministic constructions that approach the guarantees of random codes.
Algebraic constructions provide a natural rich family of codes with strong parameters.
These include Reed-Solomon codes, folded Reed-Solomon codes~\cite{GR08}, and multiplicity codes~\cite{GW13,KSY14,Kopparty2014}.
Recently, there has been significant progress on the list size for list-decoding these codes~\cite{KoppartyRSW2023, Tamo2024, Srivastava2025,ChenZ2025,AHS25,brakensiek2025random}.
In particular, it is now known~\cite{ChenZ2025} that these codes achieve an optimal trade-off of list-decodability up to an error fraction $1-R-\eps$ with list size $O(1/\eps)$, where $R$ is the rate.
However, a major limitation is that these well-studied algebraic codes typically require the alphabet size to be growing with the block length of the code.\footnote{An exception is the family of Algebraic-Geometry (AG) codes.
There are explicit constructions of optimal-rate list-decodable AG codes over constant-sized alphabets \cite{GuoR22,GX22}, but analogous results for list-recovery are still unknown.}

This leads to an important question: can one construct explicit codes over a \emph{constant-sized alphabet} (independent of the block length) with similar guarantees as random ones?
Recently, there has been significant progress in this direction, particularly through combinatorial constructions based on expander graphs.
A remarkable work~\cite{JeronimoMST2025} showed how expander-based constructions via the ``AEL framework'' (which is also the focus in this work and will be described in \Cref{sec:AEL-overview}) can yield constant-alphabet codes matching the list-decoding performance of algebraic codes such as folded Reed-Solomon codes (including the optimal list size).
A subsequent work~\cite{SrivastavaT2025} even gave near-linear-time list-decoding algorithms with such guarantees.
Following this paradigm, a recent work of Jeronimo and Shagrithaya~\cite{jeronimo2025AEL} showed that the AEL framework can transform (small) random linear codes into explicit codes while preserving a broad class of ``local'' properties.

\parhead{Local properties.}
We first give a brief overview of \emph{local coordinate-wise linear (LCL) properties} introduced in \cite{LMS-focs25}, which generalize various previous notions of local properties \cite{ShangguanT20,MRRSW20,GM22,GuruswamiMRSW22,GZ23,GM-DA24}.
We will restrict to $\F$-additive codes (i.e., $\F$-linear codes with alphabet $\F^s$ for some $s\in \N$).
Informally, a property of a code is LCL if it is witnessed by a small set of distinct codewords $x^{(1)},\dots,x^{(L)} \in \calC$ such that, for each $i\in[n]$, the entries $(x_i^{(j)})_{j\in [L]}$ satisfy certain linear constraints.
Typically, we are interested in the \emph{complement} of LCL properties; that is, we would like a code that does not contain distinct codewords $x^{(1)}, \dots,x^{(L)}$ satisfying any property in some family of LCL properties.
For example, a code is list-decodable if it does \emph{not} contain $L$ codewords close to any arbitrary vector, and this can be precisely captured by such local linear constraints.

The LCL framework captures several fundamental properties including list-decodability, list-recoverability, average pairwise distance~\cite{CheraghchiGV2013}, list-decodability from burst errors \cite{RothV09}, and proximity gaps \cite{GGproximity25}.
With this unified lens, Levi, Mosheiff and Shagrithaya \cite{LMS-focs25} showed precise threshold theorems with respect to LCL properties for random linear codes in $\F_q$ as well as random Reed-Solomon codes, hence recovering several known results \cite{BGM22,AGGLZ25}.

Motivated by this, in an exciting work, Jeronimo and Shagrithaya \cite{jeronimo2025AEL} showed explicit constructions that achieve the same guarantees as random linear codes for \emph{any} LCL property.
Their construction is based on the Alon-Edmonds-Luby (AEL) construction \cite{AEL95}, which is a generalization of the graph-based distance amplification technique of \cite{ABNNR92}.
This framework has since been used in several explicit code constructions \cite{KMRS17,HW18,KoppartyRSW2023,JeronimoMST2025,SrivastavaT2025,jeronimo2025AEL}.

An underlying theme of the AEL framework is a ``local-to-global'' phenomenon,\footnote{The local-to-global principle has been highly successful across theoretical computer science, from the classical Tanner codes \cite{Tanner81,SS96}, the PCP theorem \cite{AS98,ALMSS98,Dinur2007}, to recent developments such as locally testable codes \cite{DELLM22,PK22} and vertex expanders \cite{HMMP24,HLMOZ25,HLMRZ25}.}
where one uses a ``pseudorandom'' object, such as an expanding graph, to lift properties of a small constant-sized object to an infinite family of large objects.
Jeronimo and Shagrithaya showed that this paradigm extends to LCL properties.
Starting from a constant-sized code with LCL properties matching random linear codes (which can be obtained by brute force), the AEL construction gives an infinite family of codes that inherit these properties.
We define the AEL construction and provide a more detailed discussion of \cite{jeronimo2025AEL} in \Cref{sec:AEL-overview,sec:JS-overview}.

\parhead{Subspace design captures all local properties.}
In another novel recent work, Brakensiek, Chen, Dhar, and Zhang \cite{brakensiek2025random}
showed that a property known as \emph{subspace design} simultaneously captures \emph{all} LCL properties.
A subspace design, first introduced by Guruswami and Xing \cite{GX-stoc13}, is a collection $\{H_1,\dots,H_n\}$ of linear subspaces in $\F^m$ such that for any low-dimensional subspace $\calA \subseteq \F^m$, the average intersection $\E_{i}\dim(H_i \cap \calA)$ is much smaller than $\dim(\calA)$.
For an $\F_q$-additive code $\calC \subseteq (\F_q^s)^n$, we say that it is a subspace design code if the collection $\{\calC_i\}_{i\in [n]}$, where $\calC_i = \{x\in \calC \mid x_i=0\}$, forms a subspace design; see \Cref{def: Subspace-designs} for the formal definition.
Note that this property necessarily requires the code to be \emph{folded}, i.e., over an alphabet $\F_q^s$ for some sufficiently large $s$.
Indeed, if $\calC$ is $\F$-linear over $\F$ and $\calA \subseteq \calC$ is any subspace, then the restriction $x_i=0$ imposes only a single linear constraint, and thus $\dim(\calC_i \cap \calA) \geq \dim(\calA) - 1$.

Guruswami and Kopparty \cite{GK16} showed that folded Reed–Solomon codes and univariate multiplicity codes satisfy the subspace design property. 
The aforementioned work of Chen and Zhang \cite{ChenZ2025} on optimal list-decodability of folded Reed-Solomon codes in fact showed that the subspace design property with optimal parameters  implies optimal list-decodability. Specifically, this implies that folded Reed-Solomon codes and univariate multiplicity codes achieve optimal list-decoding bounds.

Recently, Brakensiek et.~al.~\cite{brakensiek2025random} showed that being a subspace design code is the unifying property for the LCL framework of \cite{LMS-focs25}.
At a high level, they showed that if random codes with rate $R$ avoids a local property $\calV$ with high probability, then any near-optimal subspace design code 
with rate close to $R$ also does not contain $\calV$.
By \cite{GK16}, their result implies explicit codes that simultaneously simulate all local properties of random linear codes.

However, such codes require the alphabet size to be at least polynomial in the block length.
For constant-alphabet subspace design codes, \cite{brakensiek2025random} proved existence by showing that a (folded) random linear code satisfies the subspace design property with high probability (see \Cref{thm: Folded RLCs are subspace-designs}).
They further posed the explicit construction of near-optimal subspace design codes over constant-sized fields as an open question.

\parhead{Our work: explicit subspace design codes.}
We give explicit constructions of \emph{subspace design codes} over constant-sized alphabets, This resolves the above-mentioned open question of \cite{brakensiek2025random}, and also abstracts and generalizes the framework of \cite{JeronimoMST2025,jeronimo2025AEL}.
As mentioned earlier, the subspace design property alone implies (the avoidance of) \emph{all} LCL properties shown for random linear codes and the construction of \cite{jeronimo2025AEL}.

\begin{theorem} \label{thm:subspace-design}
    For any finite field $\FF_q$, positive integer $r$, and reals $R,\epsilon \in (0,1)$, there exists an explicit $\F_q$-additive code $C \subseteq \Sigma^n$ of rate $R$ and alphabet $\Sigma = \FF_q^{\poly(r,1/\epsilon)\cdot q^{r^2}}$ such that for any subspace $\calA \subseteq \calC$ of dimension at most $r$, we have that
    \begin{align} \label{eq:subspace-design}
        \frac{1}{n} \sum_{i=1}^n \dim (\calA_i)\le (R+\epsilon) \cdot \dim(\calA)  \,,
    \end{align}
    where $\calA_i = \{x\in \calA \mid x_i = 0\}$.
\end{theorem}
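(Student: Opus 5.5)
The plan is to instantiate the AEL construction. The outer code $\Cout \subseteq (\F_q^{k})^{n}$ is an explicit $\F_q$-additive code of rate $1-\epsilon/4$ that corrects a small constant fraction $\delta$ of erasures; for instance, a near-MDS algebraic-geometry code or an expander code suffices. The inner code $\Cin \subseteq (\F_q^{s})^{d}$ has rate $R+\epsilon/4$ and is a folded random linear code. By \Cref{thm: Folded RLCs are subspace-designs}, such a code is a subspace design code with parameter $R+\epsilon/4$ for all subspaces of dimension at most $r$, and it can be found by brute force in constant time, because $d$ and $s$ depend only on $r,1/\epsilon,q$. We take a $d$-regular bipartite Ramanujan-type expander $G=(\Vleft,\Vright,E)$ with $\lambda \le \epsilon^{c}$. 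Each left vertex $u$ holds $\Ein(\Eout(m)_u)\in(\F_q^s)^d$, and we redistribute these symbols along the edges, so each right vertex collects the $d$ symbols on its incident edges. The alphabet becomes $\F_q^{sd}$. The rate is $(1-\epsilon/4)(R+\epsilon/4)\ge R$, after trimming. The bound $s=\poly(r,1/\epsilon)$ together with $d=\poly(1/\epsilon)\cdot q^{r^2}$, which is needed below for a union bound over subspaces, accounts for the claimed alphabet size.

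Fix $\calA\subseteq\CAEL$ with $\dim\calA=a\le r$. For each right vertex $v$, $\dim(\calA_v)$ is the dimension of the set of $x\in\calA$ vanishing on all edges at $v$. The key steps are as follows.

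First, each left vertex $u$ gives a local map $\pi_u:\calA\to\Cin$. Let $B_u=\ker\pi_u$ and let $\calA^{(u)}=\pi_u(\calA)$. The inner subspace design property applied to $\calA^{(u)}$ gives $\E_{e\ni u}\dim(\calA^{(u)}_e)\le (R+\epsilon/4)\dim\calA^{(u)}$ whenever $\dim\calA^{(u)}\le r$. We then lift this to $\calA$: the kernel of evaluation at edge $e$ inside $\calA$ has dimension $\dim B_u+\dim\calA^{(u)}_e$.

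Second, the outer code controls how often $B_u$ is large. If $W\subseteq\calA$ has dimension $j$, the set of $u$ with $B_u\supseteq W$ has density at most $1-\delta'$, because every nonzero outer codeword has relative weight above $\delta$. We want the stronger statement that $\E_u\dim B_u\le (\epsilon/4) a$. This needs the outer code to be a subspace design as well, with parameter $\epsilon/4$. I would obtain that by choosing $\Cout$ to be a folded Reed--Solomon code, using \cite{GK16}, which is explicit with polynomial alphabet. Combining the two steps gives
\[
\E_{e}\dim(\calA_e)\le \E_u[\dim B_u+(R+\epsilon/4)(a-\dim B_u)]\le (R+\epsilon/2)a,
\]
where $\calA_e$ is the edge-level kernel.

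Third, and this is the main obstacle, we pass from edges to right vertices. We have $\calA_v=\bigcap_{e\ni v}\calA_e$, and we need $\E_v\dim\calA_v\le \E_e\dim\calA_e+\epsilon a/2$. Intersections can only shrink dimension, but we need $\dim\calA_v$ to be at most roughly the edge average at $v$, and that inequality goes the wrong way in general. The first thing I would try is a layered count. We have $\dim\calA_v=\sum_{W}$ over a flag, equivalently $\dim\calA_v=\sum_{t=1}^{a}\mathbb{1}[\dim\calA_v\ge t]$. For each subspace $W\subseteq\calA$, let $S_W$ be the set of left vertices $u$ with $\pi_u(W)$ having large vanishing density. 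Relative to $\Vright$, let $T_W$ be the set of right vertices $v$ with $W\subseteq\calA_v$. Every edge into $T_W$ has $W$ vanishing on it. The expander mixing lemma then forces $|T_W|\lesssim$ (edge-vanishing density of $W$) $+\lambda$. We union bound over the at most $q^{a^2}\le q^{r^2}$ subspaces $W$, which is exactly why $d$ and the alphabet carry the $q^{r^2}$ factor, and we take $\lambda\le\epsilon q^{-r^2}/r$. Summing over $t$ with $W$ ranging over $t$-dimensional subspaces then gives $\E_v\dim\calA_v\le \E_e\dim\calA_e+r q^{r^2}\lambda\cdot a\le(R+\epsilon)a$. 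The delicate point is making this union-bound summation produce a linear-in-$a$ bound rather than one with overcounting across nested subspaces. I would handle it by charging each $v$ to a canonical flag of $\calA_v$, fixed in advance by an ordering of $\calA$.
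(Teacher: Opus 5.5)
There is a genuine gap, and it is in your Step 2, not Step 3. The requirement $\frac1n\sum_u\dim B_u\le(\eps/4)a$ cannot hold for an outer code of rate $1-\eps/4$. Let $\calA$ be spanned by the image of a minimum-weight outer codeword. By Singleton its support has density at most $\eps/4+o(1)$, so $B_u=\calA$ for all other $u$, and the average of $\dim B_u$ is at least $1-\eps/4-o(1)$. Demanding that $\Cout$ be a subspace design with parameter $\eps/4$ forces its rate below roughly $\eps/4$ (\Cref{rem:subspace-design-parameter} applied to $\Cout$), which destroys the overall rate. A folded Reed--Solomon outer code also has alphabet growing with $n$, so it cannot feed an inner code with constant message length $\kin$.

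More fundamentally, $\frac{1}{nd}\sum_e\dim\calA_e$ is just the subspace-design average of the concatenated code $\Cin\circ\Cout$ with edges as coordinates. That code has distance at most $\delta_{\out}$, so no edge-level bound near $(R+\eps/2)a$ is available.

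The obstacle you flag in Step 3 does not exist. Since $\calA_v\subseteq\calA_e$ for every $e\ni v$, we have $\dim\calA_v\le\frac1d\sum_{e\ni v}\dim\calA_e$, and regularity gives $\frac1n\sum_v\dim\calA_v\le\frac1{nd}\sum_e\dim\calA_e$ with no error term. So nothing in your argument needs $G$ to expand. It would equally ``prove'' the bound for $d$ parallel copies of a perfect matching, where the resulting code is just $\Cout$ with re-encoded symbols and has distance at most $\delta_{\out}\le\eps/4+o(1)$. The whole gain of AEL is that $\calA_v$ intersects constraints coming from $d$ \emph{different} left vertices, and an edge average cannot see that.

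Your Step 1 (applying the inner design to $\pi_u(\calA)$ and lifting via $\dim B_u+\dim\calA^{(u)}_e$) is sound; the problem is how you aggregate it. The paper argues in the opposite direction and uses the outer code only through the event $\pi_u(\calA)=0$.
\begin{itemize}
\item Assume the right-vertex average exceeds $(\tau(r)+\eps)\dim\calA$. Write the kernels as a local profile $(V_1,\dots,V_n)$ in $V\cong\calA^\ast$.
\item Quotient by a maximum-dimensional maximizer of $\Phi$ (\Cref{lem: too many constraints for all subspaces}). This arranges $\Phi(U,(V_1,\dots,V_n),\tau(r)+\eps)<0$ for \emph{every} nonzero $U\subseteq V$.
\item For each fixed $U$, apply the second-moment expander bound of \Cref{lem:EML} to $x_v=\dim(U\cap V_v)$, then union-bound over at most $q^{r^2}$ subspaces $U$. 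All but a $\lambda^2q^{r^2}/\eps^2<\delta_{\out}$ fraction of left vertices $u$ then inherit a neighborhood profile with negative potential at parameter $\tau(r)$ for every nonzero $U$.
\item Since $\pi_u(\calA)$ is a quotient of $\calA$, negativity on all $U$ is exactly what \Cref{lem: Enc_j(A) is probably 0} needs to deduce $\pi_u(\calA)=0$ from the inner subspace design. This holds on more than a $1-\delta_{\out}$ fraction of left vertices, contradicting the outer distance.
\end{itemize}
Using the second moment is also what lets $\lambda\approx\eps q^{-r^2/2}\sqrt{\delta_{\out}}$ suffice, so $d=\poly(1/\eps)\,q^{r^2}$. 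Your choice $\lambda\le\eps q^{-r^2}/r$ would cost $q^{2r^2}$ in the alphabet.
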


The guarantee in \Cref{thm:subspace-design} is precisely the definition of a subspace design code.
In words, for any low-dimensional subspace $\calA \subseteq \calC$, restricting a coordinate to $0$ reduces the dimension to roughly $R \cdot \dim(\calA)$ on average.
Our alphabet size is $\exp(\exp(r^2))$ when $q$ is a constant, and it is an interesting question whether a better analysis could yield a smaller alphabet.
We note that because of the $q^{r^2}$ term, we can even choose $\eps \leq q^{-o(r^2)}$.

\begin{remark}[Distance and the subspace design parameter]
\label{rem:subspace-design-parameter}
    Suppose $\calA = \spn\{x^*\}$ is a $1$-dimensional subspace, where $x^*\in \calC$ is a minimum-weight codeword of relative weight $\delta$.
    Then, $\frac{1}{n} \sum_{i=1}^n \dim(\calA_i) = 1 - \delta$, since $\calA_i = \{0\}$ whenever $x_i^* \neq 0$.
    Thus, \Cref{eq:subspace-design} implies that the distance of our code $\delta \geq 1-R-\eps$, which approaches the Singleton bound $\delta \leq 1- R + o(1)$.
    
    On the other hand, the Singleton bound also implies that $\frac{1}{n}\sum_{i=1}^n \dim(\calA_i) \geq R - o(1)$.
    In particular, \Cref{eq:subspace-design} in \Cref{thm:subspace-design} is almost optimal.
\end{remark}

Focusing on the single subspace design property allows us to obtain an arguably simpler analysis and better parameters for downstream applications.
To compare with prior work, we consider list-recovery, which was in fact one of the original motivations of studying LCL properties and subspace design.
Using the reduction from subspace design to list-recovery established in \cite{brakensiek2025random}, our construction achieves a list size bound that matches the best known bound for any linear code.
Moreover, among explicit constructions with near-optimal list sizes, our construction has the smallest alphabet size.
In \Cref{sec:comparison}, we give a more detailed discussion, including a comparison with prior work in \Cref{tab:comparison}.

For concreteness, we also include list-decoding, list-recovery, and curve-decoding parameters for our construction in \Cref{sec:consequences}.
These results follow directly from the results of \cite{ChenZ2025,brakensiek2025random,GGproximity25}.

\subsection{The Alon-Edmonds-Luby (AEL) construction}
\label{sec:AEL-overview}

Our construction is based on the Alon-Edmonds-Luby (AEL) framework \cite{AEL95}, which we present in this section.

\begin{definition}[The Alon-Edmonds-Luby (AEL) construction]
\label{def:AEL}
    Given
    \begin{enumerate}[(1)]
        \item an outer code $\Cout$ with an encoder $\Eout : \Sigma_{\out}^k \to (\Sigma^{\kin})^n$,
        \item an inner code $\Cin$ with an encoder $\Ein : \Sigma^{\kin} \to \Sigma^d$, and
        \item a $d$-regular bipartite graph $G = (\Vleft, \Vright, E)$ with $\Vleft = \Vright = [n]$, where for each vertex, there is a fixed ordering of its incident edges.
    \end{enumerate}
    The AEL code $\CAEL = \CAEL(\Cout, \Cin, G) \subseteq (\Sigma^d)^n$ is defined as follows.
    For any message $x\in \Sigma_{\out}^k$, we first encode it with the outer code and obtain $y = \Eout(x) \in (\Sigma^{\kin})^n$.
    Each entry $y_i$ is an element in $\Sigma^{\kin}$, and we encode it using the inner code, which gives $\Ein(y_i) \in \Sigma^{d}$.
    Then, we place each entry of $\Ein(y_i)$ on the $d$ incident edges of vertex $i$ in $\Vleft$.
    The final codeword $z \in (\Sigma^d)^n$ is such that for each $j\in \Vright$, $z_j \in \Sigma^d$ is the collection of elements on the incident edges of $j$.
\end{definition}

In most settings, $|\Sigma|$, $\kin$ and $d$ are constants, and the inner code is viewed as a \emph{random} code, since one can obtain a code with properties matching those of random codes by brute force.
The graph $G$ is chosen to be a spectral expander, for which explicit constructions are well known.
For the outer code $\Cout$, one may choose explicit codes tailored to the application.
In \cite{HW18}, list-recoverable codes (for erasures) were used, and it was shown that $\CAEL$ inherits this property.
In \cite{KMRS17}, they used multiplicity codes of \cite{KSY14} (in the sub-constant distance regime) and showed that $\CAEL$ inherits the local correctability of the outer code.
In \cite{KoppartyRSW2023}, they instantiated the outer code using folded Reed-Solomon codes and showed that $\CAEL$ inherits the list-recovery property.

In an exciting work, Jeronimo, Mittal, Srivastava, and Tulsiani \cite{JeronimoMST2025} showed that it suffices for the outer code to have constant distance in order for $\CAEL$ to be list-decodable up to capacity.
In particular, this removes any application-specific requirement on the outer code, yielding the first purely combinatorial codes approaching the generalized Singleton bound.
Their construction relies solely on properties of the ``random'' inner code, and they interpreted this as another instance of a ``local-to-global'' phenomenon, where properties of a constant-sized object transfer to an infinite family of large objects.

\subsection{Overview of our work}

Our construction also follows the ``local-to-global'' paradigm of the AEL framework (\Cref{def:AEL}).
It was proved in \cite{brakensiek2025random} that a folded random linear code satisfies near-optimal subspace design with high probability (\Cref{thm: Folded RLCs are subspace-designs}), and thus we can use it as the inner code.
In our main theorem (\Cref{thm:main-AEL-theorem}),
we show that the AEL construction inherits the subspace design property of the inner code, as long as the outer code has constant distance and the graph is a sufficiently good expander.

In our analysis, we use an abstract interpretation of the LCL properties in \cite{LMS-focs25}.
Previously, local properties are defined with respect to a locality parameter $L$ (e.g., the list size in the context of list-decoding and list-recovery).
In \cite{LMS-focs25}, a \emph{local profile} is defined to be a tuple of subspaces $(V_1,\dots,V_n)$, where each $V_i$ is a subspace in $\F_q^L$.
For technical issues related to differing field sizes and block lengths, \cite{jeronimo2025AEL} defined it as a tuple of matrices $(M_1,\dots,M_n)$ in $\F_q^{L\times L}$, and
a code $\calC$ contains a local profile if there are $L$ distinct codewords such that their $i$-th coordinates, viewed as a vector in $\F_q^L$, is in the kernel of $M_i$.

We define local profiles more abstractly as a tuple of subspaces $(V_1,\dots,V_n)$ in an arbitrary vector space $V$ over $\F_q$.
A $\F_q$-additive code $\calC \subseteq (\F_q^s)^n$ contains $(V_1,\dots,V_n)$ if there exist a subspace $A \subseteq \F_q^k$ (in the message space) and an isomorphism $\varphi : V \to A^\ast$, where $A^\ast$ denotes the dual space of $A$ (i.e., the set of linear functionals from $A \to \F_q$), that satisfy the following:
for each $i\in[n]$, $\varphi(V_i)$ (a subspace of $A^\ast$) has kernel (a subspace of $A$) contained in $\ker(\Enc_i)$, where $\Enc_i: \F_q^k \to \F_q^s$ is the encoding map for coordinate $i$.
See \Cref{def: local profiles} for the formal definition.
Although more abstract, we find that this definition avoids technicalities encountered in prior work and also leads to a more streamlined proof.

With this definition, we show an equivalence statement for subspace design codes (\Cref{thm: local-ish for subspace-design}):
the $\tau$-subspace design property is equivalent to the property that every local profile $(V_1,\dots,V_n)$ contained in $\calC$ satisfies $\Phi \geq 0$, where $\Phi$ is some potential function depending only on the parameter $\tau$ and the average of $\dim(V_i)$.
We note that similar potential functions were also used in \cite{LMS-focs25} and \cite{jeronimo2025AEL}.

With this equivalence in hand, our analysis roughly goes as follows.
Suppose our code contains any local profile $(V_1,\dots,V_n)$ with $\Phi < 0$, due to a subspace $A \subseteq \F_q^k$.
Then, transferring this to the left side of the AEL construction, the expansion of the graph implies that most inner codes also have local profiles $(V_{N_1(j)}, \dots, V_{N_d(j)})$ with negative potential; here $j \in \Vleft$ and $N_1(j),\dots,N_d(j)$ are its $d$ neighbors in $\Vright$.
However, by the subspace design property of the inner codes, this forces $\Eoutj{j}(A) = 0$ for most indices $j$, contradicting the distance of the outer code.

\subsection{The work of \texorpdfstring{\cite{jeronimo2025AEL}}{JS25}}
\label{sec:JS-overview}

In this section, we briefly discuss the work of \cite{jeronimo2025AEL}.
Jeronimo and Shagrithaya~\cite{jeronimo2025AEL} showed that any LCL property satisfied (or avoided) by the inner code can be transferred to the AEL code.
Since the inner code is a ``random'' code, we know precise characterizations of what LCL properties it avoids (due to \cite{LMS-focs25}).
Thus, if there are $L$ distinct codewords that satisfy a local profile, which they define as a tuple of $L \times L$ matrices $(M_1,\dots,M_n)$, then porting over to the left side, it follows that most inner codewords satisfy a close approximation of the local profiles, which would be a contradiction.

The above is a simplified overview of \cite{jeronimo2025AEL} that hides several details.
A challenge that they need to overcome is that the inner codewords may not be pairwise distinct.
To overcome this, they need to work with a \emph{robust} version of \emph{implied} local profiles (which also appear implicitly in \cite{LMS-focs25}), which we will not elaborate here.
One advantage of working directly with subspace design in our work is that we avoid such technical subtleties.

\parhead{Threshold rates may change over extension fields.}
A crucial detail we would like to point out is that the construction of \cite{jeronimo2025AEL}, particularly the inner code, is $\F_q$-linear, while the LCL properties are over a subfield $\F_{q_0}$ (i.e., $M_i \in \F_{q_0}^{L\times L}$ where $q_0 < q$ are powers of the same prime).

The inner code they chose is a random $\F_q$-linear code.\footnote{From personal communication, the authors of \cite{jeronimo2025AEL} suggest that their proof goes through if they chose the inner code to be $\F_{q_0}$-additive.}
This leads to their construction working only for local properties over $\FF_{q_0}$ whose threshold rates are the same for random linear codes over extension fields.  
It is conceivable that for some properties of interest, the threshold over an extension field may be different. This is \textit{not} a concern for list-decoding and list-recovery (the primary application of \cite{jeronimo2025AEL}), as we know that the thresholds for these properties are independent of the field extension.

On the other hand, one might consider some properties that $\FF_{q_0}$-additive codes can have, but for which it is unclear whether $\FF_q$-linear codes can have them.  As a concrete example, we mention a local property that was shown by \cite{GGproximity25} to hold for $\FF_{q_0}$-additive subspace design codes, but is not known to hold for codes linear over an extension of $\FF_{q_0}$. 

An $\FF_{q_0}$-additive code $C \subseteq \Sigma^n$ is said to be \textit{low-dimension-recoverable} if, for any $1$-dimensional $\FF_{q_0}$-linear spaces $A_1,\dots, A_n \subseteq \Sigma$, the space of all codewords $c\in A_1\times A_2 \times \cdots \times A_n$ is $O(1)$-dimensional.
This is naturally a property that linear codes over $\Sigma=\FF_{q_0}$ can never have, as then we would have $A_1\times\cdots\times A_n = \Sigma^n$. On the other hand, $\FF_{q_0}$-additive subspace design codes of rate $1/2$ 
do have this property \cite{GGproximity25}, but we do not know if rate $1/2$ $\FF_q$-linear codes for any proper extension $\F_q$ of $\F_{q_0}$ possess such low-dimensional recoverability.

Note that since we construct subspace design codes in this work, our codes naturally inherit \emph{all} LCL properties that hold for subspace design codes, including low-dimensional recoverability.

\subsection{Comparison to prior work on list-recovery}
\label{sec:comparison}

For comparison with prior work, we focus on list-recovery of our construction.
A code $\calC \subseteq \Sigma^n$ is $(\rho, \ell, L)$-list recoverable if for every collection of sets $S_1,S_2,\dots, S_n \subseteq \Sigma$, each $|S_i| \leq \ell$, there are at most $L$ codewords $c\in \calC$ that are distance $\leq \rho$ away from $S_1 \times S_2 \times \cdots \times S_n$, i.e.,
\begin{align*}
    \big| \big\{ c \in \calC \ \big| \  |\{ i\in[n]: c_i \notin S_i \}| \leq \rho n  \big\} \big| \leq L \,.
\end{align*}
Note that $(\rho, L)$-list-decodability is equivalent to $(\rho,1,L)$-list-recoverability.

Brakensiek et.~al.~\cite{BCZ25} showed a reduction from subspace design to list-recovery; see also \cite[Theorem 5.1]{GG25b}.
Plugging in \Cref{thm:subspace-design} to their reduction, we get the following as an immediate corollary.

\begin{corollary}[Consequence of \cite{BCZ25}]
    Our code $\calC \subseteq \Sigma^n$ from \Cref{thm:subspace-design} of rate $R$ and alphabet $\Sigma = \F_q^{\poly(\ell/\eps) \cdot q^{2\ell^2/\epsilon^2}}$
    is $(1-R-\eps, \ell, L)$-list-recoverable for
    \begin{align*}
        L \leq \left\lceil\inparen{ \frac{\ell}{R+\eps} }^{\frac{R}{\eps} + 1}\right\rceil \,.
    \end{align*}
\end{corollary}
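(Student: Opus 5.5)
The plan is to combine \Cref{thm:subspace-design} with the subspace-design-to-list-recovery reduction of \cite{BCZ25} (see also \cite[Theorem 5.1]{GG25b}), choosing the dimension parameter $r$ appropriately. Concretely, the reduction says roughly the following. Let $\calC$ be an $\F_q$-additive code of rate $R$. Suppose that for every subspace $\calA \subseteq \calC$ of dimension at most $r$ we have $\frac1n\sum_i \dim(\calA_i) \le (R+\eps')\dim(\calA)$. Then $\calC$ is $(1-R-\eps,\ell,L)$-list-recoverable with $L \le \lceil (\ell/(R+\eps))^{R/\eps+1}\rceil$, provided $r$ is at least about $\ell/\eps$ and $\eps'$ is a suitable constant fraction of $\eps$ (or smaller).

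The argument behind that reduction runs as follows. Suppose there are too many codewords near $S_1\times\cdots\times S_n$, each agreeing with the lists on at least a $(R+\eps)$ fraction of coordinates. Take the span $\calA$ of a carefully chosen subset of them, of dimension at most $r$. For each coordinate $i$, the projection of $\calA$ to coordinate $i$ can hit at most $\ell$ list elements. A counting or "pigeonhole on affine structure" argument then shows that $\E_i \dim(\calA_i)$ must exceed $(R+\eps')\dim(\calA)$, contradicting \eqref{eq:subspace-design}.

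So the steps are as follows. First, state the precise hypotheses of the \cite{BCZ25} reduction: the required dimension bound $r = r(\ell,\eps)$ and the required subspace design slack $\eps' = \eps'(\ell,\eps)$. Second, instantiate \Cref{thm:subspace-design} with rate $R$, dimension parameter $r \approx \sqrt{2}\,\ell/\eps$ (so that $q^{r^2} = q^{2\ell^2/\eps^2}$), and slack $\eps'$. Third, verify that the resulting alphabet $\F_q^{\poly(r,1/\eps')\cdot q^{r^2}}$ matches the claimed $\F_q^{\poly(\ell/\eps)\cdot q^{2\ell^2/\eps^2}}$. This holds because $\poly(r,1/\eps') = \poly(\ell/\eps)$ whenever $\eps'$ is polynomially related to $\eps$ and $\ell$.

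The only real obstacle is bookkeeping. We must confirm that the cited reduction needs subspace design only for dimensions up to $O(\ell/\eps)$, with exactly the constant producing $2\ell^2/\eps^2$ in the exponent, and that its slack requirement is polynomial in $\eps/\ell$. If instead it demands slack exponentially small in $\ell/\eps$, we will exploit the remark after \Cref{thm:subspace-design}: $\eps$ may be taken as small as $q^{-o(r^2)}$ without changing the leading $q^{r^2}$ factor. We would then absorb the cost into the $\poly(\ell/\eps)$ factor or the exponent as needed. No further argument is required, since list-recovery is then inherited verbatim from the reduction.
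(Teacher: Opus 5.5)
Your route is the paper's: the corollary is \Cref{thm:subspace-design} fed into the \cite{BCZ25} reduction (\Cref{thm: list-recovery}). The corollary is pure parameter bookkeeping, though, and your bookkeeping fails at the decisive point.

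In \Cref{thm: list-recovery} you split $\eps=\eps_0+\eps_1$, take $r=\lceil \ell/\eps_1\rceil$, and need slack $\tau(r)\le R+\eps_0$. The radius is then $1-R-\eps$, but the list bound is $X^{\eps/\eps_1}$ with $X=(\ell/(R+\eps))^{(R+\eps)/\eps}$, not $X$. A slack $\eps_0$ that is a constant fraction of $\eps$ (say $\eps/2$) gives $X^{2}$. Any $\eps_0$ polynomial in $\eps/\ell$ still gives a bound above $\lceil X\rceil+1$ once $\eps$ is small, since $X$ grows exponentially in $1/\eps$. The ceiling in the statement comes from integrality of the list size. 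You must choose $\eps_0$ so small that $X^{\eps/\eps_1}\le X+1/2$, which requires $\eps_0$ at most about $\eps/(X\log X)$. The paper takes $\eps_0=\eps^2/(4L\log(\ell/\eps))$ with $L=\lceil X\rceil$ (see the worked version in \Cref{sec:consequences}). You never identify this step, and your description of what the reduction outputs is inaccurate.

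As a result, your Step 3 fails as written. $\poly(r,1/\eps_0)$ now contains a factor $\poly(L)=q^{O(\frac{R+\eps}{\eps}\log_q\frac{\ell}{R+\eps})}$, which is not $\poly(\ell/\eps)$ and must be absorbed into the power of $q$. That is possible only if $r$ is as small as the reduction allows, $r=\lceil \ell/\eps_1\rceil\approx \ell/\eps$. Then $q^{r^2}\approx q^{\ell^2/\eps^2}$ leaves headroom for $q^{O(\log_q L)}$ and for rounding $r$ up.

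Your choice $r\approx\sqrt2\,\ell/\eps$, made ``so that $q^{r^2}=q^{2\ell^2/\eps^2}$'', uses up the whole exponent and leaves nowhere for $\poly(L)$ to go. The constant in front of $\ell^2/\eps^2$ is slack for this absorption, not a prescription for $r$; after absorbing, the paper's detailed version in \Cref{sec:consequences} states the exponent as $4\ell^2/\eps^2$. Your fallback paragraph points in the right direction. To make the argument go through, commit to $r=\lceil\ell/\eps_1\rceil$ and $\eps_0\approx \eps^2/(L\log(\ell/\eps))$, and use the integrality argument.
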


\renewcommand{\arraystretch}{1.6}

\begin{table}[ht]
\centering
\begin{tabular}{|l|c|c|c|}
\hline
Constructions & List size $L$ & Alphabet size $|\Sigma|$ & Explicit? \\
\hline
Random linear codes \cite{LS25} & $(\ell/\eps)^{O(\ell/\eps)}$ & $\ell^{O(1/\eps)}$ & No \\
\hline
Random linear codes \cite{BCZ25} & $\inparen{ \frac{\ell}{R+\eps} }^{O(R/\eps+1)}$ & $\ell^{(\frac{\ell}{R+\eps})^{O(R/\eps)}/\eps}$ & No  \\
\hline \hline
\cite{KoppartyRSW2023} & $(\ell/\eps)^{(\ell/\eps)^2 \log(\ell/\eps)}$ & $\ell^{O(1/\eps^4)}$ & Yes \\
\hline
\cite{SrivastavaT2025} & $\exp\big( \exp \big( \wt{O}(\ell/\eps) \big) \big)$ & $\exp\big( \exp \big( \wt{O}(\ell/\eps) \big) \big)$ & Yes \\
\hline
\cite{brakensiek2025random} & $\inparen{ \frac{\ell}{R+\eps} }^{O(R/\eps+1)}$ & $(n\ell/\eps^2)^{O(\ell/\eps^2)}$ & Yes \\
\hline
\cite{jeronimo2025AEL} & $\inparen{ \frac{\ell}{R+\eps} }^{O(R/\eps+1)}$ & $\exp\big( \exp\big( \exp\big( \wt{O}(\ell/\eps) \big) \big) \big)$ & Yes \\
\hline
Our work & $\inparen{ \frac{\ell}{R+\eps} }^{O(R/\eps+1)}$ & $\exp \inparen{ \exp \inparen{ O(\ell^2/\eps^2) } }$ & Yes \\
\hline \hline
Any linear code \cite{LS25} & $\geq \ell^{\floor{\max(R/\eps,1)}}$ & Any & - \\
\hline
\end{tabular}
\caption{Adapted from \cite{KoppartyRSW2023,brakensiek2025random,
jeronimo2025AEL}. Constructions of $(1-R-\eps, \ell, L)$-list-recoverable codes of constant rate $R \in (0,1)$.}
\label{tab:comparison}
\end{table}

In \Cref{tab:comparison}, we compare our list-recovery parameters with prior constructions.
Note that our list size upper bound $L \leq \big( \frac{\ell}{R+\eps} \big)^{O(R/\eps+1)}$ matches the best known list size bound for random linear codes (in fact, any additive code), which is expected since optimal subspace design codes simulate all local properties of random linear codes \cite{brakensiek2025random}.
Moreover, this bound is almost tight as it was shown that all linear codes must have $L \geq \ell^{\Omega(R/\eps)}$ \cite{ChenZ2025,LMS-focs25,LS25}.
Among explicit constructions with near-optimal list size, folded Reed-Solomon and multiplicity codes \cite{brakensiek2025random} naturally require an alphabet size polynomial in $n$, while the construction of \cite{jeronimo2025AEL} has a constant alphabet but is exponentially larger than ours.

One concrete place where we get an improvement in the alphabet size is the following.
To study list-recovery using the LCL framework, one needs to look at high locality properties, namely $L = \ell^{O(1/\eps)}$-local LCL properties.
In our case, we only need to consider $O(\ell/\eps)$-dimensional subspaces, which saves an exponential factor.

\subsection{Organization}

We organize our paper as follows.
In \Cref{sec:prelims}, we introduce some preliminaries.
In \Cref{sec:characterization}, we define the potential function and local profiles, and we show equivalence statements for subspace design codes.
Then, in \Cref{sec:construction}, we prove our main theorem (\Cref{thm:main-AEL-theorem}), which implies \Cref{thm:subspace-design}.
Finally, in \Cref{sec:consequences}, we present the list-decoding, list-recovery, and curve-decoding parameters for our construction.

\section{Preliminaries}
\label{sec:prelims}

We begin by introducing the basic coding-theoretic definitions we will be using.
For any two vectors $x, y$ in $\Sigma^n$ where $\Sigma$ is some alphabet, we define the fractional Hamming distance $\Delta(x, y) = \frac{1}{n} |\{i\in [n]: x_i \ne y_i\}|$ to be the fraction of coordinates where they differ.
For a set $S\subseteq \Sigma^n$, we define $\Delta(x, S) = \min_{y\in S} \Delta(x,y)$
to be the fractional distance of $x$ to its closest vector in $S$.
Throughout this paper, all distances are taken to be fractional unless stated otherwise.

The two fundamental quantities associated with a code are its rate and distance.
For a code $\calC \subseteq \Sigma^n$, we define its (relative) distance as $\delta(\calC) = \min_{x, y \in \calC, x\ne y} \Delta(x,y)$.
Moreover, its rate $R(\calC)$ is defined as $R(\calC)= \frac{1}{n} \log_{|\Sigma|}|\calC|$.

In this paper, we will focus on additive codes over a finite field, defined as follows:

\begin{definition}[Additive codes]
    Let $\FF$ be a finite field and let $\Sigma = \FF^s$ for some positive integer $s$. 
    A code $\calC \subseteq \Sigma^n$ is said to be $\FF$-additive (or just additive when the field $\FF$ is clear from context) if $\calC$ is an $\FF$-linear subspace of $\Sigma^n$. When $s=1$, the code is simply called a linear code.
\end{definition}

\subsection{Vector spaces and dual spaces}

We will need the following linear-algebraic notations and definitions.
Let $V$ be a finite-dimensional vector space over a field $\F$.
We define $\calL(V)$ to be the set of all linear subspaces of $V$.
For subspaces $U, W$, we write $U + W = \{u + w \mid u \in U,\ w \in W\}$.
Suppose $W \subseteq U$, then the quotient space $U / W \coloneqq \{u + W : u \in U\}$ consists of the set of cosets.

\begin{definition} \label{def:dual-space}
    Let $V$ be a vector space over a field $\FF$.
    \begin{itemize}        
        \item The \emph{dual space} $V^*$ is the vector space of all linear maps $f: V \to \FF$.
        
        \item If $A \subseteq V$ is a subspace, its \emph{annihilator} is
        \[
        A^\perp \coloneqq \{ f \in V^* \mid f(a) = 0 \ \forall a \in A \} \,.
        \]
        
        \item If $B \subseteq V^*$ is a subspace, its \emph{joint kernel} is
        \[
        B^\circ \coloneqq \{ v \in V \mid f(v) = 0 \ \forall f \in B \} \,.
        \]
    \end{itemize}
    \end{definition}

We note that the annihilator and kernel are dual concepts, i.e., $(A^\perp)^\circ = A$.
For a subspace $B \subseteq V^\ast$ and a vector $v\in V$, we will use $Bv = 0$ to mean that $f(v) = 0$ for all $f\in B$.
Thus, we can write $B^\circ = \{v\in V \mid Bv=0\}$.

\subsection{Subspace design codes}

We now review the concept of subspace design codes.
Subspace designs have been the driving force in many recent advances in coding theory, including \cite{Srivastava2025,ChenZ2025, AHS25, BCZ25, brakensiek2025random,GGproximity25,GG25b}.
In the following, we use the notation as established in \cite{GGproximity25}.

\begin{definition}[Subspace-Design Property]
\label{def: Subspace-designs}
    For any function $\tau: \mathbb N\rightarrow \R_{\le 1}$, an $\FF_q$-additive code $\calC \subseteq (\FF_q^s)^n$ is said to be a $\tau$-\textit{subspace design} code if for every $r\in \mathbb N$, and every $\F_q$-linear subspace $\calA$ of $\calC$ of dimension at most $r$,
    \[ \frac{1}{n} \sum\limits_{i=1}^n \dim(\calA_i) \le \dim(\calA)\cdot \tau(r) \,, \]
    where $\calA_i = \{x \in \calA \mid x_i=0\}$.
\end{definition}

We will often use \Cref{def: Subspace-designs} where we define the subspaces in the \textit{message space} $\F_q^k$.
In particular, suppose the code $\calC = \{(\Enc_1(x),\dots, \Enc_n(x)) \mid x\in \F_q^k\}$.
Let $A' \subseteq \F_q^k$ is a subspace of dimension $\le r$ and $A = \Enc(A') \subseteq \calC$, then $A_i$ is precisely the subspace $\Enc(A' \cap \ker(\Enc_i))$, and we have the following,
\begin{align*}
    \frac{1}{n} \sum_{i=1}^n \dim(A' \cap \ker(\Enc_i)) \leq \dim(A') \cdot \tau(r) \,.
\end{align*}

Note that the subspace design property necessarily requires the code to be \emph{folded}, otherwise if $\calC$ is $\F_q$-linear over $\F_q$, then we have $\dim(\calA_i) \geq \dim(\calA) - 1$ because $x_i = 0$ imposes only a single linear constraint.

\begin{remark}[Parameter $\tau$]
    We can assume that $\tau(r)$ is a non-decreasing function, and moreover, the statement trivially holds when $\tau(r) = 1$.
    For our construction as well as known subspace design codes, $\tau$ is a function such that $\tau(r)$ is small for all small $r$, and $\tau(r) = 1$ otherwise.
    
    On the other hand, from the discussion in \Cref{rem:subspace-design-parameter}, the Singleton bound implies that $\tau(r) \geq R-o(1)$, where $R$ is the rate of the code.
    Thus, we would like \emph{near-optimal} subspace design codes where $\tau(r) \leq R+\eps$ for all small $r$ (via a sufficiently large alphabet size depending on $r$).
\end{remark}

\paragraph{Folded Reed-Solomon codes and folded random linear codes.}
An $s$-folded Reed-Solomon code \cite{GR08} is an $\F_q$-additive code in $(\F_q^s)^n$ where $q > sn$.
Given evaluation points $\alpha_1,\dots,\alpha_n \in \F_q$ and a primitive element $\gamma \in \F_q$ where $\alpha_i \gamma^t \neq \alpha_j$ for all $i\neq j$ and $t < s$, the code is obtained by encoding a polynomial $f\in \FF_q[x]$ of degree at most $k$ such that the $i$-th entry consists of the tuple $(f(\alpha_i),f(\alpha_i \gamma), \dots, f(\alpha_i \gamma^{s-1})) \in \F_q^s$.

Guruswami and Kopparty \cite{GK16} proved that folded Reed-Solomon (as well as univariate multiplicity codes) are optimal subspace design codes.
This subspace design property has been at the heart of the recent progress on list-decoding and list-recovery of these codes.

\begin{theorem}[\cite{GK16}]\label{thm: GKS16 subspace design}
    The $s$-folded Reed-Solomon codes of rate $R$ are $\tau$-subspace-design codes for $\tau(r)= \frac{sR }{s-r+1}$ for all $1\le r\le s$ and $\tau(r)=1$ otherwise.
\end{theorem}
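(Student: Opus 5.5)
We prove the statement of \Cref{thm: GKS16 subspace design} following the Guruswami--Kopparty approach. Fix an $\F_q$-subspace $A'$ of the message space, i.e., a space of polynomials of degree $< k$ with $\dim A' = r \le s$, and let $k \approx Rsn$ be the dimension. For each $i$, the subspace $A'\cap\ker(\Enc_i)$ consists of those $f\in A'$ vanishing at all $s$ points $\alpha_i,\alpha_i\gamma,\dots,\alpha_i\gamma^{s-1}$. The goal is to show
\[
\sum_{i=1}^n \dim(A'\cap \ker(\Enc_i)) \le \frac{(k-1)\, r}{s-r+1},
\]
which gives $\tau(r) = \frac{sR}{s-r+1}$ after dividing by $nr$ and using $k-1 < Rsn$.

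The key tool is a folded Wronskian. Pick a basis $f_1,\dots,f_r$ of $A'$ and form the $r\times r$ matrix $M(X) = \big(f_j(\gamma^{t} X)\big)_{0\le t<r,\,1\le j\le r}$, with $W(X)=\det M(X)$. First I would show $W\not\equiv 0$. This is the standard fact that the $f_j(\gamma^t X)$ determinant is nonzero for linearly independent polynomials of degree $< k \le \mathrm{ord}(\gamma)$. I would prove it by a row-reduction argument: after a change of basis, assume the $f_j$ have distinct leading degrees $d_1<\dots<d_r$. The top-degree coefficient of $W$ is then a Vandermonde determinant in $\gamma^{d_j}$, which is nonzero since $\gamma$ is primitive and $d_j < q-1$. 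This gives $\deg W \le r(k-1)$, and more precisely $\sum_j d_j \le r(k-1) - \binom{r}{2}$, though the cruder bound suffices.

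Next comes the vanishing multiplicity. Suppose $\dim(A'\cap\ker(\Enc_i)) = m_i$. Changing the basis so that $f_1,\dots,f_{m_i}$ lie in $\ker(\Enc_i)$ multiplies $W$ by a nonzero constant. For each point $x=\alpha_i\gamma^{u}$ with $0\le u\le s-r$, every entry $f_j(\gamma^t x)$ with $j\le m_i$ vanishes, since $\gamma^{t+u}\alpha_i$ is among the folded points when $t+u\le s-1$. So for such $x$ the first $m_i$ columns of $M(x)$ are zero. The main obstacle is upgrading this to multiplicity: I need $W$ to vanish to order $\ge m_i$ at $X=\alpha_i\gamma^u$. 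Each of the first $m_i$ columns of $M(X)$, as a vector of polynomials, has every entry divisible by $(X-\alpha_i\gamma^u)$, since $f_j(\gamma^t X)$ vanishes at $X=\alpha_i\gamma^u$. Factoring one copy out of each of these $m_i$ columns gives $(X-\alpha_i\gamma^u)^{m_i}\mid W$. The points $\alpha_i\gamma^u$ for distinct $(i,u)$ with $u\le s-r$ are distinct by the hypothesis $\alpha_i\gamma^t\ne\alpha_j$, so these factors are coprime.

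Finally, counting degrees gives
\[
(s-r+1)\sum_i m_i \le \deg W \le r(k-1).
\]
Dividing by $n\cdot r=n\dim(A')$ (or handling $\dim A'<r$ by applying the argument with its actual dimension and using monotonicity of $\frac{s}{s-r+1}$ in $r$) yields $\frac1n\sum_i\dim(A'_i)\le \dim(A')\cdot\frac{sR}{s-r+1}$. For $r>s$ the bound $\tau=1$ is trivial.
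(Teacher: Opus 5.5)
The paper gives no proof of this statement; it is quoted from \cite{GK16}. Your argument is correct and is essentially the Guruswami--Kopparty folded-Wronskian proof. It has three steps: nonvanishing of $W$ by reducing to distinct degrees and reading off a Vandermonde determinant in the $\gamma^{d_j}$; order-$m_i$ vanishing at each of the $s-r+1$ points $\alpha_i\gamma^u$ by factoring $(X-\alpha_i\gamma^u)$ out of the first $m_i$ columns; and degree counting. Together these give $\sum_i \dim(A'\cap\ker(\Enc_i))\le r(k-1)/(s-r+1)$, and hence $\tau(r)=sR/(s-r+1)$ once you use monotonicity in $r$.

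One small correction concerns distinctness. The hypothesis $\alpha_i\gamma^t\neq\alpha_j$ for $i\neq j$ only separates points belonging to different coordinates. For the same $i$, you need $\alpha_i\neq 0$ and $\mathrm{ord}(\gamma)>s-r$. Both hold for the standard choice of nonzero evaluation points and primitive $\gamma$. If $\alpha_i=0$ were allowed, that coordinate would collapse to the single constraint $f(0)=0$ and the multiplicity count would break down.
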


It is also known that folded random linear codes form subspace design codes as well. This result was proved in \cite{brakensiek2025random}.

\begin{theorem}[\cite{brakensiek2025random}]
\label{thm: Folded RLCs are subspace-designs}
    Let $\eps > 0$.
    Let $\Enc_1, \dots, \Enc_n: \FF_q^k \to \FF_q^s$ be uniformly random linear functions.
    Then, with probability at least $1-o(1)$, $\calC = \{(\Enc_1(x),\dots,\Enc_n(x)) \mid x \in \FF_q^k\}$ is a $\tau$-subspace design code, where $\tau(r) \leq R+\eps$ for all $r \le \eps s/4$, and $R = k/sn$ denotes the code rate.
\end{theorem}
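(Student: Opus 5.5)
The statement is \Cref{thm: Folded RLCs are subspace-designs}: a folded random linear code is, with high probability, a near-optimal subspace design for all small $r$. The plan is a first-moment (union bound) argument over bad subspaces, worked out in the message space. For a subspace $A \subseteq \F_q^k$ of dimension $t \le r$ and a tuple $(t_1,\dots,t_n)$ with $0\le t_i\le t$, let $E_{A,(t_i)}$ be the event that $\dim(A\cap\ker(\Enc_i)) \ge t_i$ for every $i$. The code fails the property exactly when some $A$ of dimension $t\le r$ and some tuple with $\sum_i t_i > (R+\eps)tn$ have $E_{A,(t_i)}$ occur. Since the $\Enc_i$ are independent, $\Pr[E_{A,(t_i)}] = \prod_i p_{t_i}$, where $p_j$ is the probability that a uniformly random linear map $\F_q^t \to \F_q^s$ (the restriction of $\Enc_i$ to $A$, which is uniform) has kernel of dimension at least $j$.

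\emph{Step 1: the single-coordinate probability.} We need $p_j \le \binom{t}{j}_q\, q^{-sj} \le q^{tj - sj + O(j)}$. This follows from a union bound over the $j$-dimensional subspaces $W\subseteq \F_q^t$: a uniformly random map kills a fixed $W$ with probability exactly $q^{-sj}$. Hence
\[
\Pr[E_{A,(t_i)}] \le q^{-(s-t-1)\sum_i t_i}.
\]

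\emph{Step 2: the union bound.} The number of subspaces $A$ of dimension $t$ is at most $q^{tk}=q^{tRsn}$, and the number of tuples $(t_i)$ is at most $(t+1)^n$. Write $T=\sum_i t_i > (R+\eps)tn$. The total failure probability for dimension $t$ is then at most
\[
q^{tRsn}\,(t+1)^n\,q^{-(s-t-1)(R+\eps)tn}.
\]
The exponent of $q$, up to the $(t+1)^n$ factor, is
\[
tn\bigl[Rs - (R+\eps)(s-t-1)\bigr] \le tn\bigl[-\eps s + (t+1)\bigr].
\]
Using $t\le r\le \eps s/4$, this is at most $-tn\,\eps s/2$ once $s$ is large enough that $t+1 \le \eps s/2$. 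This bound swamps $(t+1)^n \le q^{n\log_q(s)}$ for large $s$. Summing over $t\le r$ gives failure probability $q^{-\Omega(\eps s n)} = o(1)$.

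\emph{Main obstacle.} The delicate point is the regime where $t$ is small relative to the loss terms, i.e.\ the additive slack from the Gaussian-binomial bound together with the count of tuples $(t+1)^n$. These terms must be absorbed by the $\eps s t n$ gain, which requires $s$ to be large compared with $\log s/\eps$. Note also that $\tau(r)\le R+\eps$ must hold with $R=k/(sn)$ exactly; if $n$ is not large, the $o(1)$ should be read as asymptotic in $s$ or $n$. I would handle both by choosing $s \ge C\log(1/\eps)/\eps$ and tracking the constants carefully. Finally, taking $\tau(r)=R+\eps$ for $r\le \eps s/4$ and $\tau(r)=1$ otherwise completes the argument, with the union over all $t \le r$ already included in the summation above.
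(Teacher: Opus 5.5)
The paper does not prove this statement. It quotes it from \cite{brakensiek2025random} and uses it only as a black box, to guarantee that a good inner code exists and can be found by brute force. Your first-moment argument is correct and gives a self-contained proof. Each ingredient is sound: you move to the message space, factor $\Pr[E_{A,(t_i)}]=\prod_i p_{t_i}$ using independence of the $\Enc_i$, bound $p_j\le \binom{t}{j}_q q^{-sj}$, and union-bound over at most $q^{tk}$ subspaces and $(t+1)^n$ tuples. For the Gaussian binomial, the trivial bound $\binom{t}{j}_q\le q^{tj}$ already suffices, giving $p_j\le q^{-(s-t)j}$. Your ``exactly'' in the reduction is slightly too strong when $\Enc$ is not injective. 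However, you only need the direction ``the code fails $\Rightarrow$ some $A$ is bad'', and that direction always holds.

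The one thing to fix is your ``main obstacle'' paragraph. There is no obstacle, and you should not add the hypothesis $s\ge C\log(1/\eps)/\eps$, since that would weaken the statement. If $\eps s<4$, the claim is vacuous. Otherwise $1\le t\le \eps s/4$, which automatically gives $t+1\le 2t\le \eps s/2$ and $(t+1)^n\le 2^{tn}\le q^{tn}$. Assuming without loss of generality that $R+\eps\le 1$ (otherwise take $\tau\equiv 1$), the failure probability in dimension $t$ is at most
\[
q^{-tn\eps s/2+tn}\le q^{-tn\eps s/4}\le q^{-tn} \,.
\]
Summing over $t\ge 1$ gives a total of at most $2q^{-n}=o(1)$, for every $s$. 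The apparent need for large $s$ came only from bounding $t+1$ by $s$ rather than using $\log_q(t+1)\le t$.
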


\subsection{List-decoding, list-recovery and curve-decoding from subspace designs}

It was shown that list-decoding, list-recovery and curve-decoding follow from the subspace design property \cite{ChenZ2025,BCZ25,GGproximity25}.
In this section, we state these implications, following the language of \cite{GGproximity25,GG25b}.

\begin{theorem}[Average radius list-decoding of subspace design codes \cite{ChenZ2025, BCZ25}]\label{thm: list-decoding}
    Any $\tau$-subspace design additive code $\calC\subseteq (\FF_q^s)^n$ has that for any $y\in (\FF_q^s)^n$, and any distinct $c_1, \dots, c_r\in \calC$, we have that \[\sum_{i=1}^r \Delta(y, c_i)\ge (r-1)(1-\tau(r-1)) \,. \]
    In particular, there exists an $i\in [r]$ such that $\Delta(c_i, y)\ge \frac{r-1}{r}(1-\tau(r-1))$. Thus, \[\left|\left\{c\in \calC\mid \Delta(c_i,y)<\frac{r-1}{r}(1-\tau(r-1))\right\}\right|\le r-1 \,. \]
\end{theorem}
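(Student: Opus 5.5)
We prove the inequality $\sum_{i=1}^r \Delta(y,c_i)\ge (r-1)(1-\tau(r-1))$ by translating distances into dimensions of coordinate restrictions of one low-dimensional subspace, and then applying the $\tau$-subspace design property (\Cref{def: Subspace-designs}) to that subspace. The first step is to reduce to a statement about differences. Since $\calC$ is additive, set $v_j = c_j - c_1$ for $j=2,\dots,r$. These are nonzero codewords, though not necessarily linearly independent. Let $\calA = \spn\{v_2,\dots,v_r\}\subseteq \calC$, so that $\dim(\calA)\le r-1$.

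The key local step is a coordinatewise counting inequality. Fix a coordinate $i$ and let $T_i = \{j\in[r] : c_j\text{ agrees with } y \text{ at } i\}$, so that $\sum_j \Delta(y,c_j) = \frac1n\sum_i (r - |T_i|)$. If $T_i \neq \emptyset$, pick $j_0\in T_i$. Then every $c_j - c_{j_0}$ with $j\in T_i$ vanishes at coordinate $i$, and these differences lie in $\calA$. We therefore want to show that $|T_i| - 1 \le \dim(\calA_i)$, but distinctness of codewords alone does not give linear independence.

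This is the main obstacle, and I would handle it with the standard fix of \cite{ChenZ2025}: choose the subset carefully. First pass to a subset $J\subseteq[r]$ with $\{c_j - c_{j_1}\}_{j\in J\setminus\{j_1\}}$ linearly independent and $|J|$ as large as possible, so that $\dim\calA = |J|-1$. Alternatively, and more cleanly, prove the inequality in the form
\[
\sum_{j}\Delta(y,c_j)\;\ge\; \frac1n\sum_i\big(\dim \calA - \dim\calA_i\big)+(r-1-\dim\calA).
\]
To do this, observe that the affine span of $\{c_j : j\in T_i\}$ has dimension at most $\dim(\calA_i)$. The codewords outside the affine hull of $c_1,\dots,c_r$ restricted to agreeing coordinates can then be charged separately. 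Concretely, order the $c_j$ greedily: whenever $c_j$ lies in the affine span of earlier ones, it still contributes $1$ per coordinate in which it disagrees with $y$. I would verify via an induction on $r$ that each coordinate satisfies $r-|T_i| \ge (r-1) - \dim(\calA_i)$. This holds because $T_i$ is a set of distinct points contained in a coset of $\calA_i$ intersected with the affine hull, and the needed dimension bound holds once we use that $|T_i|\le 1+\dim\calA_i$ fails only when points are affinely dependent.

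If this generic per-coordinate bound proves problematic, I would instead follow \cite{ChenZ2025} exactly and use the weaker but sufficient inequality $r-|T_i|\ge \dim\calA-\dim\calA_i$ together with the additional loss $r-1-\dim\calA$, which is harmless because $\tau\le1$. Assuming the per-coordinate bound, averaging over $i$ gives
\[
\sum_j\Delta(y,c_j)\ge (r-1) - \frac1n\sum_i\dim\calA_i \ge (r-1)-\tau(r-1)\dim\calA \ge (r-1)(1-\tau(r-1)),
\]
where the second inequality is \Cref{def: Subspace-designs} applied with $\dim\calA\le r-1$, and the last uses $\dim\calA\le r-1$ (the monotone form $\tau(\dim\calA)\le\tau(r-1)$).

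The consequences then follow directly. By averaging, some $c_i$ has $\Delta(c_i,y)\ge \frac{r-1}{r}(1-\tau(r-1))$. If $r$ distinct codewords all had distance below this bound, the sum inequality would be contradicted, so at most $r-1$ codewords are that close to $y$, which gives the list-size bound.
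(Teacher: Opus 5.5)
The paper does not prove this statement; it is quoted from \cite{ChenZ2025,BCZ25}, so I judge your argument on its own. Your affinely independent case is fine. If $c_2-c_1,\dots,c_r-c_1$ are linearly independent, then at each coordinate $i$ with $T_i\neq\emptyset$ the vectors $c_j-c_{j_0}$, $j\in T_i\setminus\{j_0\}$, are independent elements of $\calA_i$. So $r-|T_i|\ge (r-1)-\dim\calA_i$, and averaging finishes.

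The affinely dependent case, which you correctly flag as the main obstacle, is never handled, and both routes you offer fail. Take $q\ge 3$, $\alpha\in\F_q\setminus\{0,1\}$, a codeword $v$ of fractional weight $\delta$ with $0<\delta<1$, the distinct codewords $c,\ c+v,\ c+\alpha v$, and $y=c$. Then $\calA=\spn\{v\}$. At a coordinate with $v_i=0$ we get $|T_i|=3$ and $\dim\calA_i=1$, so your per-coordinate bound $r-|T_i|\ge (r-1)-\dim\calA_i$ reads $0\ge 1$. Your ``cleaner'' inequality also fails: its left side is $2\delta$ but its right side is $\delta+1$. Over $\F_2$, the codewords $c,\,c+u,\,c+v,\,c+u+v$, with $u,v$ sharing a zero coordinate, break both in the same way. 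The fallback $r-|T_i|\ge\dim\calA-\dim\calA_i$ is true, but averaging it gives only $\sum_j\Delta(y,c_j)\ge\dim(\calA)\,(1-\tau(r-1))$. The shortfall $(r-1-\dim\calA)(1-\tau(r-1))$ is not something $\tau\le 1$ lets you absorb. It is on the wrong side of the inequality, and supplying it is exactly what the dependent case requires.

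What is missing is a way to extract total weight $r-1$, not $\dim\calA$, from the subspace design property. The fact to exploit is that where affinely dependent codewords separate, they produce more than one disagreement per lost dimension. One sufficient route is to find weights $w_U\ge 0$ on subspaces $U\subseteq\calA$ such that
\begin{itemize}
\item $\sum_U w_U\dim U=r-1$, and
\item $|T_i|-1\le\sum_U w_U\dim(U\cap\calA_i)$ at every coordinate $i$.
\end{itemize}
Applying \Cref{def: Subspace-designs} to each such $U$ (all have dimension at most $r-1$) and averaging then gives $\frac1n\sum_i(|T_i|-1)\le (r-1)\tau(r-1)$, which is the claim.

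For independent codewords, $w_{\calA}=1$ works. In the collinear example you need weight $2$ on $\spn\{v\}$. This is valid because at coordinates with $v_i\neq 0$ the three codewords are pairwise distinct, so $|T_i|\le 1$. Now add a fourth codeword $c+u$ off the line. Weight on $\calA$ alone would have to be $3/2$, and it fails: coordinates with $\calA_i=\spn\{v\}$ can have $|T_i|-1=2$. So the bound must also be applied to the proper subspace $\spn\{v\}$; weight $1$ on each of $\spn\{v\}$ and $\calA$ works. Constructing such a decomposition for arbitrary dependent configurations is the real work. The ``greedy ordering'' and ``charging'' steps sketched in your proposal do not provide it.
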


\begin{theorem}[List-recovery of subspace design codes \cite{BCZ25}]\label{thm: list-recovery}
    Let $\ell\in \mathbb N$ and $\epsilon>0$. For any $\tau$ subspace design additive code $\calC\subseteq (\FF_q^s)^n$, for any $L_1,\dots, L_n\subseteq (\FF_q^s)$ such that $|L_i|=\ell$ for all $i\in [n]$, we have that \[\left|\left\{c\in \calC \mid \Delta(c_i,L_1\times L_2\times \cdots\times L_n)<1-\tau(\lceil\ell/\epsilon\rceil)-\epsilon \right\}\right|\le \left(\frac{\ell}{\tau(\lceil\ell/\epsilon\rceil)+\epsilon}\right)^{(\tau(\lceil\ell/\epsilon\rceil)+\epsilon)/\epsilon} \,. \]
\end{theorem}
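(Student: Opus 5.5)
The plan is to argue by contradiction, combining a dimension-reduction step with a counting argument inside a low-dimensional affine space. Set $r=\lceil \ell/\eps\rceil$ and $\tau=\tau(r)$. Let $T$ be the set of codewords $c$ with $\Delta(c, L_1\times\cdots\times L_n)<1-\tau-\eps$. Each $c\in T$ has an agreement set $S_c=\{i : c_i\in L_i\}$ of size greater than $(\tau+\eps)n$.

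\textbf{Step 1: build a nested chain of affine spaces.} I will not work with $\operatorname{span}(T)$ directly. Instead I fix $c_0\in T$ and build a chain of affine spaces $W_0=\{c_0\}\subset W_1\subset\cdots$ with $W_j=c_0+A_j$ and $\dim A_j=j$, each time adding a direction $c-c_0$ for some $c\in T$.

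For an affine space $W=c_0+A$ and a coordinate $i$, the set $\{x\in W : x_i\in L_i\}$ is a union of at most $\ell$ cosets of $A_i=\{a\in A: a_i=0\}$. The subspace design property (\Cref{def: Subspace-designs}) gives $\frac1n\sum_i\dim(A_i)\le \tau\dim A$. So on average a coordinate cuts $W$ down to codimension at least $(1-\tau)\dim A$, while each $c\in T\cap W$ must land in these $\le\ell$ cosets on more than a $(\tau+\eps)$ fraction of coordinates.

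\textbf{Step 2: convert this tension into a bound on $|T\cap W|$.} I would use an entropy/double-counting argument. Let $x$ be uniform on $T\cap W$, and write $x=c_0+a$ with $a\in A$. Then $a$ is determined by the tuple of cosets $(a+A_i)_{i\in S}$ over suitable coordinate sets $S$. Taking a random coordinate $i$ weighted by agreement, I can bound the conditional entropy of $x_i$ given $x\in L_i$ by $\log\ell$. Averaging the inequality ``$H(\text{coset of }A_i)\le\log\ell$ on agreeing coordinates, $\le \dim(A/A_i)\log q$ otherwise'' against $\sum_i\dim(A_i)\le\tau n\dim A$ should yield a recursion. Its shape is: each time the dimension goes up by one, the number of new codewords captured is multiplied by at most roughly $\ell/(\tau+\eps)$. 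The extra $\eps$ fraction of agreements forces the dimension to stop growing after about $(\tau+\eps)/\eps$ steps. This is what produces the bound $\bigl(\ell/(\tau+\eps)\bigr)^{(\tau+\eps)/\eps}$, with $r=\lceil\ell/\eps\rceil$ ensuring that every affine space we consider has dimension at most $r$.

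\textbf{Step 3: show $T$ stays inside a dimension-$\le r$ space.} Concretely, if $T$ contained $d$ affinely independent points with $d>\ell/\eps$, then by Step 1 the agreement count would satisfy
\[
\sum_{c}|S_c| > d(\tau+\eps)n.
\]
On the other hand, since at most $\ell$ cosets of $A_i$ are available at each coordinate, the agreement count is at most roughly
\[
\sum_i\bigl(\ell+\dim A_i\bigr)\le \ell n+\tau d n.
\]
These two bounds contradict each other once $d\ge\ell/\eps$. So all of $T$ lies in an affine space of dimension at most $r$, where the subspace design bound with parameter $\tau(r)$ applies. The counting of Step 2 is then carried out inside this space.

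The main obstacle is Step 2: getting the sharp base $\ell/(\tau+\eps)$ and exponent $(\tau+\eps)/\eps$, rather than a lossy bound such as $\ell^{r}$. I would first try the greedy ``branching tree'' argument in the style of \cite{BCZ25}: at each level, choose the coordinate set maximizing agreement and apply an AM--GM optimization over how the $\ell$ list elements split the codewords. If that is too lossy, I would fall back on the entropy formulation with Shearer-type inequalities over the coordinates.
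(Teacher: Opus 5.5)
Your Step 3 is correct and is a needed ingredient. Affinely independent codewords that agree with $L_i$ at coordinate $i$ lie in at most $\ell$ parallel cosets of $A_i$. The affine hull of these cosets has dimension at most $\dim A_i+\ell-1$, so at most $\ell+\dim A_i$ of the codewords agree there. With $\lceil\ell/\eps\rceil$ such codewords this contradicts the agreement count, so $T$ lies in some $c_0+A$ with $\dim A<r$. (The paper does not prove \Cref{thm: list-recovery} itself; it imports it from \cite{BCZ25}, see also \cite{GG25b}.)

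\textbf{The gap is Step 2.} It carries all the quantitative content, and you describe it only by the shape you hope it has. Worse, the mechanism you state cannot work, because the affine dimension of $T$ is not bounded by roughly $(\tau+\eps)/\eps$.
\begin{itemize}
\item Take affinely independent codewords $c^{(1)},\dots,c^{(\ell)}$ and $L_i=\{c^{(1)}_i,\dots,c^{(\ell)}_i\}$, padded to size $\ell$. All $\ell$ codewords agree everywhere, so $\dim\operatorname{aff}(T)\ge \ell-1$.
\item Cycling $\ell$-subsets of $m<\ell/(\tau+\eps)$ affinely independent codewords through the lists pushes this to about $\ell/(\tau+\eps)$, in any code.
\end{itemize}
So the exponent $(\tau+\eps)/\eps$ cannot come from counting dimension increments. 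A recursion that multiplies by $\ell/(\tau+\eps)$ per dimension, run up to your Step 3 bound, only gives something like $(\ell/(\tau+\eps))^{\ell/\eps}$.

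\textbf{What the entropy route actually needs.} Your fallback points the right way, but the inequality that makes it work is missing. Subspace design is a statement about dimensions of subspaces. To apply it to a sparse set $T$ you need a Shearer/Brascamp--Lieb type inequality: for every random variable $X$ supported on $c_0+A$ with $\dim A\le r$,
\[
\frac1n\sum_{i=1}^n H(X_i)\;\ge\;(1-\tau(r))\,H(X) .
\]
Its hypothesis is $\frac1n\sum_i\dim(B_i)\le\tau(r)\dim B$ for every subspace $B\subseteq A$. That is exactly the subspace design condition, and this is where Step 3 is used. Passing from the dimension condition to the entropy inequality is a real lemma, a finite-field entropic Brascamp--Lieb inequality, e.g.\ via the usual critical-subspace induction on $\dim A$. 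You neither state nor prove it.

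Your per-coordinate bound on disagreeing coordinates is also unusable: $\dim(A/A_i)\log q$ is not expressed in terms of $|T|$. What you want, with $p_i=\Pr[X_i\in L_i]$, is $H(X_i\mid X_i\notin L_i)\le\log((1-p_i)|T|)$. This gives
\[
H(X_i)\le p_i\log\frac{\ell}{p_i}+(1-p_i)\log|T| .
\]
Now take $X$ uniform on $T$ and set $\bar p=\frac1n\sum_i p_i>\tau+\eps$. Combining the two displays and using concavity of $p\mapsto p\log(\ell/p)$ gives
\[
(\bar p-\tau)\log|T|\le\bar p\log(\ell/\bar p).
\]
The function $p\mapsto\frac{p}{p-\tau}\log\frac{\ell}{p}$ is decreasing on $(\tau,1]$, so this is exactly $|T|\le(\ell/(\tau+\eps))^{(\tau+\eps)/\eps}$.

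Without this entropy inequality, or an equally sharp combinatorial substitute (which you have not supplied), the proposal does not prove the theorem.
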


The above statement was originally proven in \cite{BCZ25}, and a proof for the list recovery in our language of $\tau$-subspace design codes was also presented in the appendix of \cite{GG25b}.

Additionally, we present results for curve decoding of our code constructions. Thus, we begin by defining curve-decodadibility as defined in \cite{GGproximity25}. They showed that the subspace design property of codes implies curve decoding and consequentially the notions of correlated agreement and proximity gaps. For simplicity, we refrain from discussing the consequences and demonstrate the primary conclusion of curve decoding.

\begin{definition}[Curve-decodability]\label{def: curve-decodability}
    An $\F_q$-additive code $\mathcal C\subseteq \Sigma^n$ is $(\ell, \delta, a, b)$ curve-decodable if for every $u_0, u_1,\dots, u_\ell\in \Sigma^n$,  all functions $f: \FF_q \to \calC$, whenever the set 
    \[ A=\Bigl\{\alpha \in \FF_q \mid \Delta\bigl(\smudef, f(\alpha) \bigr) \le  \delta\Bigr\}\]
    has at least $a$ elements,  there exist $c_0, c_1,\dots, c_\ell \in \mathcal C$ such that 
    \[\Bigl|\bigl\{\alpha \in A  \mid  f(\alpha)=\smcdef \bigr\}\Bigr| \ge b \ .  \]
    When $\ell=1$, we use the phrase \emph{line-decodable.}
\end{definition}

\begin{theorem}[Curve-decodability of subspace design codes \cite{GGproximity25}]\label{thm: curve-decoding}
    For arbitrary positive integers $r, \ell, a$ and any $\epsilon \ge \frac{\ell+1}{r}$, every $\tau$-subspace design code $\mathcal C\subseteq \Sigma^n$ is $(\ell, 1-\tau(r)-\epsilon, a, \frac{\epsilon}{r+\epsilon}\cdot a)$ curve-decodable. 
\end{theorem}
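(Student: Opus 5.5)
The plan is to translate the curve-decoding statement into a dimension count for a suitably chosen low-dimensional subspace of $\calC$, and then apply the $\tau$-subspace design inequality of \Cref{def: Subspace-designs} to it. Throughout, write $u(\alpha)=\smudef$, put $\delta = 1-\tau(r)-\eps$, and for $\alpha\in A$ let $T_\alpha=\{i : f(\alpha)_i = u(\alpha)_i\}$, so that $|T_\alpha|\ge (\tau(r)+\eps)n$. The basic device is Lagrange interpolation. For a finite set $S\subseteq A$, let
\[
Z_S=\Bigl\{\lambda\in\F_q^S : \textstyle\sum_{\alpha\in S}\lambda_\alpha\alpha^j=0 \text{ for all } 0\le j\le \ell\Bigr\},
\]
which has dimension $|S|-\ell-1$ whenever $|S|\ge \ell+1$, since the Vandermonde matrix on distinct points is invertible. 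Consider the $\F_q$-linear map $\psi_S:Z_S\to\calC$ given by $\lambda\mapsto \sum_\alpha\lambda_\alpha f(\alpha)$; it lands in $\calC$ by additivity. The key observation is that for $\lambda\in Z_S$ we have $\sum_\alpha\lambda_\alpha u(\alpha)=0$. Hence if $\lambda$ is supported on $\{\alpha\in S: i\in T_\alpha\}$, then $\psi_S(\lambda)_i=\sum_\alpha\lambda_\alpha\bigl(f(\alpha)_i-u(\alpha)_i\bigr)=0$. Moreover, $f|_S$ agrees with a single degree-$\ell$ curve $\sum_j c_j\alpha^j$ with $c_j\in\calC$ exactly when $\psi_S\equiv 0$, because both conditions say that $f|_S$ lies in the image of the interpolation map.

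The first main step is the case where $\psi_S$ is injective and $|S|=m\le r$. Let $\calA=\psi_S(Z_S)$, which has dimension $m-\ell-1\le r$. For each $i$, the subspace $\calA_i$ contains the image of the $\lambda\in Z_S$ supported on $S_i:=\{\alpha\in S : i\in T_\alpha\}$, so $\dim\calA_i\ge |S_i|-\ell-1$. The subspace design property then gives
\[
\frac1n\sum_i\bigl(|S_i|-\ell-1\bigr)\le \tau(r)\,(m-\ell-1).
\]
Now choose $S$ to be a uniformly random $m$-subset of $A$. Then $\E\frac1n\sum_i|S_i| = \frac1n\sum_i \E|S_i|\ge m(\tau(r)+\eps)$. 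Substituting, we get $m\eps\le (\ell+1)(1-\tau(r))\le \ell+1$, which is false for $m=r$ under the hypothesis $\eps\ge(\ell+1)/r$ (a slightly strict version, or $m$ chosen just above $(\ell+1)/\eps$, handles the boundary case). So for at least one, and in fact for most, $r$-subsets $S$, the map $\psi_S$ fails to be injective. In the non-injective case I would instead run the same count on the quotient $Z_S/\ker\psi_S$; the loss in the bound is exactly $\dim\ker\psi_S$, so the counting shows that $\ker\psi_S$ must be large on average over $S$.

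The second step, which I expect to be the main obstacle, is converting ``$\psi_S$ has a nontrivial kernel for many $S$'' into a single curve agreeing with $f$ on at least $\frac{\eps}{r+\eps}a$ points of $A$. First I would try a greedy, maximal-set argument. Take $S_0\subseteq A$ maximal such that $\psi_{S_0}$ is injective, or more canonically such that the vectors $\{f(\alpha)\}_{\alpha\in S_0}$ are ``curve-independent''. Step one caps $|S_0|$ by something like $(\ell+1)/\eps$, well below $r$. Each remaining $\alpha\in A\setminus S_0$ then satisfies a dependency expressing $f(\alpha)$ as an interpolation-type combination over $S_0\cup\{\alpha\}$. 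One then pigeonholes over the at most $r$ possible ``supports'' of these dependencies, and uses the subspace design count once more (now with $\calA$ spanned by at most $r$ residual vectors) to force many of these $\alpha$ onto one common curve $\sum_j c_j\alpha^j$. This is the step where the fraction $\frac{\eps}{r+\eps}$ should come out: the bookkeeping would compare the $(\tau(r)+\eps)$ agreement density against a $\tau(r)$ cap, with the extra points contributing the $\eps$ term. Routine steps I would not belabor are the Vandermonde invertibility, the additivity of $\psi_S$, and the averaging over random $S$.
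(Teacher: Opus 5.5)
The paper does not prove this statement; it imports it from \cite{GGproximity25}, so your argument has to stand on its own. Your first step is sound. The randomization is unnecessary: $\frac1n\sum_i|S_i|=\frac1n\sum_{\alpha\in S}|T_\alpha|\ge(\tau(r)+\eps)|S|$ holds for \emph{every} $S\subseteq A$. The step correctly bounds the rank of the vectors $(f(\alpha),(1,\alpha,\dots,\alpha^\ell))$, $\alpha\in A$, so the ``excess'' $e(S):=\dim\psi_S(Z_S)$ is at most $r$ for every $S\subseteq A$.

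The genuine gap is the second step, which carries essentially all of the content, and the sketch you give there does not work. The supports of the dependencies of points of $A\setminus S_0$ are arbitrary subsets of $S_0$. There can be up to $2^{|S_0|}$ of them, not $r$. Moreover, two points whose dependencies share a support $B$ lie on a common curve only if $e(B)=0$, essentially only if $|B|\le\ell+1$. Nothing in your sketch produces such a $B$ carrying a $\frac{\eps}{r+\eps}$ fraction of $A$, and the ``second application of the subspace design count'' is never specified. As written, you have the rank bound but not curve-decodability.

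The missing idea is to apply the subspace-design inequality to the possibly non-injective image $\calA=\psi_S(Z_S)$ itself, and then pass to the agreement set of a single coordinate. Your own observation gives $\psi_{S_i}(Z_{S_i})\subseteq\calA_i$. Since $\dim\calA=e(S)\le r$, this yields $\frac1n\sum_i e(S_i)\le\tau(r)\,e(S)$, while $\frac1n\sum_i|S_i|\ge(\tau(r)+\eps)|S|$. If $e=e(S)\ge1$, then
\[
\frac{1}{n}\sum_{i=1}^n\Bigl[|S_i|(e+\eps)-|S|\bigl(e(S_i)+\eps\bigr)\Bigr]\ \ge\ \eps|S|\,\bigl(e+\tau(r)+\eps-1\bigr)\ >\ 0 .
\]
So some $i$ satisfies $\frac{|S_i|}{e(S_i)+\eps}>\frac{|S|}{e+\eps}$. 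Because $|S_i|\le|S|$, this forces $e(S_i)<e$.

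The agreement hypothesis is hereditary, since each $\alpha\in S_i$ still has $|T_\alpha|\ge(\tau(r)+\eps)n$, so you may iterate. After at most $e(A)\le r$ steps you reach $S^*$ with $e(S^*)=0$. Then $\psi_{S^*}\equiv0$, so $f$ agrees with a single degree-$\ell$ curve on $S^*$. Monotonicity of $|S|/(e(S)+\eps)$ along the iteration gives $|S^*|\ge\frac{\eps}{e(A)+\eps}|A|\ge\frac{\eps}{r+\eps}a$. This density increment on the excess dimension is exactly where the fraction $\frac{\eps}{r+\eps}$ comes from.
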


\subsection{Expander graphs}

We review the concept of spectral expander graphs.

\begin{definition}[Spectral expander]
\label{def:expander}
    A $d$-regular bipartite graph $G = (\Vleft, \Vright, E)$ with $|\Vleft| = |\Vright| = n$ is called a $\lambda$-expander if its normalized biadjacency matrix $A$, i.e., $A_{i,j} = 1/d$ if $(i,j)\in E$ and $0$ otherwise, has second singular value $\sigma_2(A) \leq \lambda$.
\end{definition}

Normally, a spectral expander is defined as a non-bipartite $d$-regular graph whose normalized adjacency matrix satisfies $\max(\lambda_2(A), |\lambda_{\min}(A)|) \leq \lambda$.
One can convert an $n$-vertex $\lambda$-expander into a bipartite $\lambda$-expander on $[n]\times [n]$ via its double cover.

It is known that the optimal value of $\lambda$ is $\frac{2\sqrt{d-1}}{d}$ by the Alon-Boppana bound, and graphs achieving this bound are called Ramanujan graphs.
Although we do not require such a strong bound on $\lambda$, there exist many explicit constructions of near-Ramanujan graphs \cite{LPS88,Mar88,Mor94,MOP20,Alon21}.

\begin{theorem}[\cite{Alon21}]
\label{thm:expander-construction}
    For every degree $d$, every $\eps$, and all sufficiently large $n \geq n_0(d,\eps)$ where $nd$ is even, there is an explicit construction of $n$-vertex $d$-regular $\lambda$-expanders with $\lambda \leq \frac{2\sqrt{d-1}}{d} + \eps$.
\end{theorem}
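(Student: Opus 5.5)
The statement is quoted from \cite{Alon21} and we would simply invoke it. If we had to reprove it, the plan would have two parts. First, obtain an explicit near-Ramanujan family for \emph{every} degree $d$, whose sizes are only somewhat flexible. Second, perform local surgery to reach \emph{every} sufficiently large $n$ with $nd$ even, without pushing the second eigenvalue above $\frac{2\sqrt{d-1}}{d}+\eps$.

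For the first part we would use the derandomized random lifts of \cite{MOP20}. Start from a fixed base graph of degree $d$ (for instance a small Ramanujan-type graph or $K_{d+1}$). Build a $\poly(n)$-time computable $m$-lift whose new eigenvalues are at most $2\sqrt{d-1}+\eps d/2$ in unnormalized terms. The justification is a high-trace (non-backtracking walk) moment bound for random lifts whose walk length is only $O(\log m)$. This lets one replace truly random permutations by $O(\log m)$-wise independent ones, and a conditional-expectation search then fixes a good lift deterministically. This yields explicit graphs on $m\cdot|V(\text{base})|$ vertices for all large $m$. So the achievable sizes form an arithmetic progression with bounded gaps. For bipartite expanders in the sense of \Cref{def:expander} we then pass to the double cover.

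For the second part, given a target $n$, take such a graph $H$ on $m \le n$ vertices with $n-m = O(1)$. Then insert the $O(1)$ extra vertices using gadgets. Each gadget removes a few edges of $H$ that lie pairwise at distance $\gg \log(1/\eps)$ and reattaches their endpoints to the new vertices, which keeps the graph $d$-regular. The parity condition $nd$ even is exactly what makes this matching-type rewiring possible. The changed part is a matrix $P$ of constant rank and bounded entries supported on $O(1)$ far-apart local patches.

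The main obstacle is bounding the new second eigenvalue, since a naive operator-norm perturbation bound is $\Theta(1)$, far larger than $\eps$. The first approach we would try is a localization argument. Suppose $f\perp\mathbf 1$ is a unit eigenvector of the modified graph with eigenvalue above $2\sqrt{d-1}+\eps d$. Because the $d$-regular tree has spectral radius $2\sqrt{d-1}$, such an $f$ cannot concentrate near any single patch. Every ball of radius $t=O_\eps(1)$ around a patch is tree-like, except for the constant-size gadget, which adds no significant spectral radius when built from a path. This would give an $\exp(-\Omega(t))$ bound on $f$'s mass near each patch. Cutting $f$ off there produces a test vector for $H$ with nearly the same Rayleigh quotient, contradicting the spectral bound of $H$. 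The technical heart is making this decay estimate quantitative, for example via non-backtracking walk counts on the tree-like neighborhoods.
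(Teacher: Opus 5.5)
This statement is not proved in the paper. It is imported from \cite{Alon21} and used only, through the bipartite double cover, to instantiate the graph in the proof of \Cref{thm:subspace-design}. So your primary step, invoking the citation, is exactly what the paper does and is all that is required.

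Your optional reproof, however, rests on a premise that the cited work does not provide. \cite{MOP20} does not give explicit near-Ramanujan $m$-lifts of a fixed base graph for every large $m$; it only guarantees a $d$-regular graph on $\Theta(n)$ vertices. Producing graphs on exactly $n$ vertices is precisely the problem \cite{Alon21} solves. Your derandomization sketch does not fill this in either. Take a family of permutations of $[m]$ whose action on every $k$-tuple of distinct points is uniform, or even $\tfrac12$-close to uniform in total variation. Such a family must have at least $\tfrac12\, m(m-1)\cdots(m-k+1)$ members. For $k=\Theta(\log m)$ this is quasi-polynomial, so the family cannot be enumerated, and you do not say how the conditional expectations over it would be computed in polynomial time. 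Hence the assumption $n-m=O(1)$ is unsupported, and so is the ``constant-rank perturbation on $O(1)$ patches'' framing that depends on it. In general the vertex-count deficit is unbounded, so the surgery must be done at many pairwise far-apart cycle-free locations. That is the situation \cite{Alon21} handles via delocalization of eigenvectors in cycle-free neighborhoods.

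Your localization step does survive this change if you sum it over disjoint balls. Let $f$ be an eigenvector with eigenvalue $\mu>2\sqrt{d-1}+\eps d$, and take a cycle-free ball around $v$. Its sphere sums $s_i$ satisfy $s_{i+1}=\mu s_i-(d-1)s_{i-1}$ for $i\ge 2$, and they grow like $x_+^i$, where $x_+>\sqrt{d-1}$. Cauchy--Schwarz on each sphere then gives $f(v)^2\le e^{-\Omega_{\eps,d}(t)}\|f|_{B(v,t)}\|^2$. Summing over patches whose radius-$t$ balls are disjoint bounds the total mass on all patches by $e^{-\Omega_{\eps,d}(t)}$, independent of how many patches there are. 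This still leaves three things you would need to supply. First, only about $n(d-1)^{-O(t)}$ disjoint balls fit, so you need an explicit family of sizes dense at ratio $1+(d-1)^{-O(t)}$. Second, you need to show that $H$ has enough far-apart cycle-free locations; your sketch assumes this rather than proves it. Third, you need the symmetric argument for eigenvalues below $-2\sqrt{d-1}-\eps d$, which the double-cover step requires.
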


\section{Characterization of subspace designs}
\label{sec:characterization}

\subsection{Potential function and local profiles}

We first define a potential function given a tuple of subspaces.
We note that this was also used in \cite{LMS-focs25,jeronimo2025AEL}.

\begin{definition}[Potential function]\label{def:potential-function}
    Let $V$ be any finite-dimensional vector space over a field $\FF$, and let $n\in \mathbb N$.
    Define $\Phi_V: \mathcal L(V) \times (\mathcal L(V))^n \times \R \to \R$ as follows:
    \[\Phi_V(U, (V_1,\dots, V_n), \alpha)= \alpha \dim(U) - \frac{1}{n} \sum_{i=1}^n \inparen{ \dim(U) - \dim (U\cap V_i) } \,. \]
    We will drop the subscript $V$ if it is clear from context.
\end{definition}

We will need the following result on the potential function when we apply a surjective linear map $M: V \to V'$, which may have a kernel.

\begin{lemma}\label{thm: quotienting is good}
    Let $V, V'$ be vector spaces, let $U, V_1,\dots, V_n \in \mathcal L(V)$, let $M: V\to V'$ be a surjective linear map such that $\ker(M)=W \in \calL(V)$, and let $\alpha \in \R$. Then,
    \[\Phi_V(U+W, (V_1,\dots, V_n), \alpha) - \Phi_V(W, (V_1,\dots, V_n), \alpha) = \Phi_{V'}(M(U), (M(V_1), \dots, M(V_n)), \alpha) \,. \] 
\end{lemma}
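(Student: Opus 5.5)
The plan is to expand both sides using \Cref{def:potential-function} and to match them term by term. The key tool is rank–nullity for $M$ restricted to suitable subspaces. For any subspace $X \subseteq V$ we have $\dim M(X) = \dim X - \dim(X\cap W)$, and hence $\dim M(X) = \dim(X+W) - \dim W$.

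\emph{The $\alpha$-terms.} Taking $X = U$ gives $\dim M(U) = \dim(U+W) - \dim W$. This is exactly the coefficient of $\alpha$ on the left-hand side, namely $\alpha(\dim(U+W) - \dim W)$.

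\emph{The remaining terms.} We need to show, for each $i$,
\[
\bigl(\dim(U+W) - \dim((U+W)\cap V_i)\bigr) - \bigl(\dim W - \dim(W\cap V_i)\bigr) = \dim M(U) - \dim\bigl(M(U)\cap M(V_i)\bigr).
\]
The first part of the left side, $\dim(U+W) - \dim W$, again equals $\dim M(U)$. So it suffices to prove
\[
\dim\bigl(M(U)\cap M(V_i)\bigr) = \dim\bigl((U+W)\cap V_i\bigr) - \dim(W\cap V_i).
\]

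\emph{Main step.} I will prove the identity $M(U)\cap M(V_i) = M\bigl((U+W)\cap V_i\bigr)$. For the inclusion $\supseteq$, take $v\in (U+W)\cap V_i$ and write $v = u + w$. Then $M(v) = M(u)$, which lies in both $M(U)$ and $M(V_i)$. For the inclusion $\subseteq$, suppose $M(u) = M(v)$ with $u\in U$ and $v\in V_i$. Then $v - u\in W$, so $v\in (U+W)\cap V_i$, and $M(u) = M(v)$ lies in the image of that space.

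Applying rank–nullity to the restriction of $M$ to $X = (U+W)\cap V_i$ gives
\[
\dim M(X) = \dim X - \dim(X\cap W).
\]
Since $W\subseteq U+W$, we have $X\cap W = W\cap V_i$, which is exactly what is needed.

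Averaging over $i$ and combining with the $\alpha$-term completes the proof. The only potentially delicate point is the intersection identity $M(U)\cap M(V_i) = M((U+W)\cap V_i)$, since images do not commute with intersections in general. Adding $W$ to $U$ is precisely what fixes this. Surjectivity of $M$ is not actually needed beyond making $V'$ the ambient space.
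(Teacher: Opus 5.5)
Your proof is correct and follows the paper's argument: you reduce to the per-$i$ identity $\dim((U+W)\cap V_i)-\dim(W\cap V_i)=\dim(M(U)\cap M(V_i))$, then apply rank--nullity to $M$ restricted to $X=(U+W)\cap V_i$, whose kernel is $W\cap V_i$. You also make explicit two steps the paper leaves implicit: $\dim M(U)=\dim(U+W)-\dim W$ for the $\alpha$-terms, and the proof that $M(X)=M(U)\cap M(V_i)$, which holds because $W\subseteq U+W$.
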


\begin{proof}
    By \Cref{def:potential-function}, it suffices to prove that for each $V_i$,
    \[\dim ((U+W)\cap V_i) - \dim(W \cap V_i) = \dim (M(U) \cap M(V_i)) \,.\] 
    Let $X = (U+W) \cap V_i$, and consider the restriction $M|_X : X \to V'$.
    We have that the image $\im(M|_X) = M(X) = M(U+W) \cap M(V_i) = M(U) \cap M(V_i)$, since $M(W) = \{0\}$.
    The kernel $\ker(M|_X) = X \cap W$, which equals $W \cap V_i$ because $(U + W) \cap W = W$.
    Then, by the rank-nullity theorem, $\dim(X) = \dim \ker(M|_X) + \dim \im(M|_X) = \dim(W \cap V_i) + \dim(M(U) \cap M(V_i))$.
\end{proof}

Next, we define local profiles.

\begin{definition}[Local profiles] \label{def: local profiles}
    Let $\Enc_1, \dots, \Enc_n: \FF_q^k \to \FF_q^s$ be linear maps for some $s \in \N$, and $\calC = \{(\Enc_1(x), \dots, \Enc_n(x))\mid x\in \FF_q^k\}$ be the associated $\FF_q$-additive code.
    Throughout this paper, we will assume that $\calC$ has positive distance, i.e., $\Enc: \F_q^k \to (\F_q^s)^n$ is injective.
    
    Let $V$ be any $\FF_q$-linear space. A sequence $(V_1,\dots, V_n)\in \mathcal L(V)^n$ is called a $V$-local profile, or simply local profile when $V$ is clear from context.

    We say that $\calC$ contains a $V$-local profile $(V_1,\dots, V_n)$ if there exists a subspace $A\subseteq \mathcal \FF_q^k$ and an isomorphism $\varphi : V \to A^\ast$ such that, for each $i\in [n]$, we have $\varphi (V_i)^\circ \subseteq \ker(\Enc_i)$;
    that is, any vector $a\in A$ with $\varphi(V_i)(a) = 0$ satisfies $\Enc_i(a) = 0$.
\end{definition}

Before proceeding, we first provide some intuition for \Cref{def: local profiles} and its connection to subspace design codes and the potential function $\Phi$ (\Cref{def:potential-function}).
In \Cref{def: local profiles}, $A \subseteq \F_q^k$ is in the message space.
For each $i\in[n]$, $\varphi(V_i)$ is a subspace of $A^\ast$, so the kernel $\varphi(V_i)^\circ$ is a subspace of $A$.
By the rank-nullity theorem, we have
\begin{align*}
    \dim(\varphi(V_i)^\circ) = \dim(A) - \dim(\varphi(V_i)) = \dim(V) - \dim(V_i) \,.
\end{align*}
Here, we use that $\varphi$ is an isomorphism from $V$ to $A^\ast$.

Since $\varphi(V_i)^\circ \subseteq A$, the inclusion $\varphi (V_i)^\circ \subseteq \ker(\Enc_i)$ implies $\dim(\varphi (V_i)^\circ) \leq \dim(A \cap \ker(\Enc_i))$.
Suppose $\wt{A} = \Enc(A) \subseteq \calC$ and $\wt{A}_i = \{x \in \wt{A} \mid x_i = 0\}$ (tying back to the definition of subspace design codes in \Cref{def: Subspace-designs}),
then $\wt{A}_i = \Enc(A \cap \ker(\Enc_i))$, and
\begin{align*}
    \frac{1}{n} \sum_{i=1}^n \dim(\wt{A}_i) = \frac{1}{n} \sum_{i=1}^n \dim(A \cap \ker(\Enc_i))
    \geq \frac{1}{n} \sum_{i=1}^n \dim(\varphi (V_i)^\circ)
    = \frac{1}{n} \sum_{i=1}^n (\dim(V)-\dim(V_i)) \,.
\end{align*}
The right-hand-side appears in the potential function $\Phi_V(V, (V_1,\dots,V_n), \alpha)$.
In \Cref{sec:equivalence-of-subspace-designs}, we will crucially rely on this connection to show a characterization of subspace design codes.

A key advantage of this definition of local profile is its versatility.
In the following, we show that the property of containing a local profile is preserved under taking quotients.

\begin{lemma} \label{lem: quotienting}
    Let $V$ be a linear space and $(V_1,\dots, V_n)$ be a $V$-local profile, and let $W\subseteq V$.
    If a code $\calC=\{(\Enc_1(x),\dots, \Enc_n(x)) \mid x\in \FF_q^k\}$ contains $(V_1,\dots, V_n)$, then it also contains $\left((V_1+W)/W, \dots, (V_n+W)/W \right)\in \mathcal L(V/W)^n$.
\end{lemma}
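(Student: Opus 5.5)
Given that $\calC$ contains $(V_1,\dots,V_n)$ via a subspace $A \subseteq \F_q^k$ and an isomorphism $\varphi: V \to A^\ast$, I will exhibit a smaller subspace $A' \subseteq A$ together with an isomorphism $\varphi': V/W \to (A')^\ast$. The natural choice is
\[
A' \coloneqq \varphi(W)^\circ = \{a \in A \mid f(a) = 0 \ \forall f \in \varphi(W)\}.
\]
Restriction of functionals gives a map $\rho: A^\ast \to (A')^\ast$, $f \mapsto f|_{A'}$. It is surjective, since every functional on $A'$ extends to $A$. Its kernel is $(A')^\perp = (\varphi(W)^\circ)^\perp = \varphi(W)$, using the duality $(B^\circ)^\perp = B$ in finite dimensions. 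The composition $\rho\circ\varphi: V \to (A')^\ast$ is therefore surjective with kernel exactly $W$, so it descends to an isomorphism $\varphi': V/W \to (A')^\ast$ with $\varphi'(v+W) = \varphi(v)|_{A'}$.

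It remains to verify the containment condition $\varphi'((V_i+W)/W)^\circ \subseteq \ker(\Enc_i)$. We have $\varphi'((V_i+W)/W) = \{\varphi(v)|_{A'} : v \in V_i + W\}$. Suppose $a \in A'$ is annihilated by all of these. Then $\varphi(v)(a) = 0$ for every $v \in V_i$, so $a \in \varphi(V_i)^\circ \subseteq \ker(\Enc_i)$, which is what we need. The $W$-part of the condition is automatic, because $a \in A'$ is already killed by $\varphi(W)$.

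The only step that is slightly delicate is the kernel computation $\ker(\rho) = \varphi(W)$, which needs the finite-dimensional double-annihilator identity. This is standard and matches the remark after \Cref{def:dual-space} that $(A^\perp)^\circ = A$, applied dually. Everything else is direct unwinding of \Cref{def: local profiles}. Finally, $\varphi'$ is well defined precisely because $\ker(\rho\circ\varphi) = W$.
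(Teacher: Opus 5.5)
Your proposal is correct and follows the paper's proof essentially step for step. It uses the same choice $A' = \varphi(W)^\circ$, the same restriction map $A^\ast \to (A')^\ast$ with kernel $(A')^\perp = \varphi(W)$, the same induced isomorphism $V/W \to (A')^\ast$, and the same verification of the containment condition. You also explicitly justify surjectivity of the restriction map via extension of functionals, which the paper leaves implicit.
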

\begin{proof}
    By \Cref{def: local profiles}, if $\calC$ contains $(V_1,\dots,  V_n)$, then there exists a subspace $A\subseteq \mathcal \FF_q^k$ and an isomorphism $\varphi: V\to A^\ast$ such that $\varphi(V_i)^\circ \subseteq \ker(\Enc_i)$. 

    $\varphi(W)$ is a subspace in $A^\ast$, and let $A' = \varphi(W)^\circ$ be its kernel, which is a subspace in $A$ of dimension $\dim(V) - \dim(W) = \dim(V / W)$.
    Define a linear map $M: A^\ast \to A'^\ast$ given by the restriction $M(f) \coloneqq f|_{A'}$, where $M$ takes a functional on $A$ and restricts it to $A'$.
    The kernel of $M$ is the set of functionals $f \in A^\ast$ such that $f(a) = 0$ for all $a\in A'$,
    i.e., $\ker(M) = (A')^\perp$, the annihilator of $A'$, which means that $\ker(M) = (\varphi(W)^\circ)^\perp = \varphi(W)$ (see \Cref{def:dual-space}).
    Then, since $\varphi$ is an isomorphism, $\ker(M \circ \varphi) = \varphi^{-1}(\ker(M)) = W$.

    The map $M \circ \varphi: V \to A'^\ast$ is surjective.
    By the isomorphism theorem, $A'^\ast \cong V/\ker(M\circ \varphi) = V/W$, and there is an isomorphism $\varphi': V/W \to A'^\ast$ that maps a coset $v+W$ to $M(\varphi(v))$.

    Now, it suffices to prove that $\varphi'((V_i+W)/W)^\circ \subseteq \ker(\Enc_i)$.
    By definition, $\varphi'((V_i+W)/W) = M(\varphi(V_i))$.
    Since $M$ only restricts to evaluating on $A'$ without changing their values, for any $a\in A'$ we have $M(\varphi(v))(a) = \varphi(v)(a)$.
    Thus, any $a\in A'$ with $M(\varphi(V_i))(a) = 0$ also implies $\varphi(V_i)(a) = 0$, and by the assumption $\varphi(V_i)^\circ \subseteq \ker(\Enc_i)$, we have $a\in \ker(\Enc_i)$ as desired.
\end{proof}

\subsection{Equivalence statements for subspace design codes}
\label{sec:equivalence-of-subspace-designs}

With the definitions of local profiles and potential function in hand, we next show an equivalence statement for subspace design codes.

\begin{theorem}\label{thm: local-ish for subspace-design}
    An additive code $\calC=\{(\Enc_1(x),\dots, \Enc_n(x))\mid x\in \FF_q^k\}\subseteq \left(\FF_q^s\right)^n$ is a $\tau$-subspace design code if and only if for any $r\in \mathbb N$, vector space $V$ of dimension at most $r$, and every $V$-local profile $(V_1, \dots, V_n)$ that $\calC$ contains, we have
    \[\Phi_V(V, (V_1,\dots, V_n), \tau(r)) \ge 0 \,. \]
\end{theorem}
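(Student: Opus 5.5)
I will prove the two directions separately. Both rest on the rank-nullity identity from the discussion after \Cref{def: local profiles}: whenever $\calC$ contains $(V_1,\dots,V_n)$ via $A$ and $\varphi$, we have $\dim(\varphi(V_i)^\circ) = \dim(V) - \dim(V_i) \le \dim(A\cap\ker(\Enc_i))$. It is convenient to rewrite the potential at $U = V$:
\[
\Phi_V(V,(V_1,\dots,V_n),\alpha) = \alpha\dim(V) - \frac1n\sum_{i=1}^n \inparen{\dim(V)-\dim(V_i)} .
\]
Thus $\Phi \ge 0$ is equivalent to $\frac1n\sum_i (\dim V - \dim V_i) \le \alpha \dim V$. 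This has exactly the shape of the subspace design inequality, with $\dim(V)-\dim(V_i)$ playing the role of $\dim(A\cap\ker\Enc_i)$.

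\textbf{($\Rightarrow$)} Assume $\calC$ is a $\tau$-subspace design code. Let $\dim V\le r$, and let $\calC$ contain $(V_1,\dots,V_n)$ via $A$ and $\varphi$. Since $\varphi$ is an isomorphism, $\dim A=\dim V\le r$. Set $\wt A=\Enc(A)$. Because $\Enc$ is injective, $\dim\wt A=\dim A$ and $\dim\wt A_i=\dim(A\cap\ker\Enc_i)$. The subspace design property then gives
\[
\frac1n\sum_i(\dim V-\dim V_i)\le \frac1n\sum_i\dim(A\cap\ker\Enc_i)\le \tau(r)\dim A=\tau(r)\dim V,
\]
which is exactly $\Phi\ge0$.

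\textbf{($\Leftarrow$)} Given a subspace $\calA\subseteq\calC$ of dimension at most $r$, pull it back to $A=\Enc^{-1}(\calA)\subseteq\F_q^k$; this has the same dimension by injectivity. I then build a local profile that makes the inequality above tight. Take $V=A^\ast$, $\varphi=\mathrm{id}$, and $V_i=(A\cap\ker\Enc_i)^\perp\subseteq A^\ast$. By the duality $(B^\perp)^\circ=B$, we get $\varphi(V_i)^\circ=A\cap\ker\Enc_i\subseteq\ker\Enc_i$, so $\calC$ contains this profile. Moreover $\dim V-\dim V_i=\dim(A\cap\ker\Enc_i)=\dim\calA_i$. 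Applying the hypothesis with this $r$ gives $\Phi_V(V,(V_i),\tau(r))\ge0$, which rearranges to $\frac1n\sum_i\dim\calA_i\le\tau(r)\dim\calA$, the $\tau$-subspace design inequality.

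The only real care needed is in the reverse direction: choosing the canonical profile via annihilators and checking that $\dim (B^\perp) = \dim A - \dim B$ inside $A^\ast$. The degenerate case $\dim A=0$ is trivial, since then both sides vanish. I do not expect any genuine obstacle; the argument is bookkeeping with dual spaces.
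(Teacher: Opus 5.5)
Your proof is correct and is essentially the paper's argument. The forward direction uses $\varphi(V_i)^\circ \subseteq A \cap \ker(\Enc_i)$ together with rank-nullity and injectivity of $\Enc$, and the reverse direction uses the same canonical profile $V = A^\ast$, $V_i = (A \cap \ker(\Enc_i))^\perp$. The only difference is presentation: the paper proves each direction by contrapositive, turning a profile with $\Phi < 0$ into a violating subspace and vice versa, whereas you argue both directions directly.
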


\begin{proof}
    We first show that if the conclusion does not hold, then $\calC$ is not a $\tau$-subspace design code.
    In particular, there exists a positive integer $r$, a vector space $V$ of dimension $\le r$, and a $V$-local profile $(V_1,\dots, V_n)$ that $\calC$ contains, such that $\Phi_V(V, (V_1,\dots, V_n), \tau(r)) < 0$.

    Recall from \Cref{def: local profiles} that $\calC$ containing $(V_1,\dots, V_n)$ means that there exists a subspace $A \subseteq \FF_q^k$, and an isomorphism $\varphi: V \to A^\ast$, such that $\varphi(V_i)^\circ \subseteq \ker(\Enc_i)$.
    In fact, since $\varphi(V_i)^\circ$ is a subspace in $A$, we have $\varphi(V_i)^\circ \subseteq \ker(\Enc_i) \cap A$.
    By the rank-nullity theorem, we have $\dim(\varphi(V_i)^\circ) = \dim(A^\ast) - \dim(\varphi(V_i)) = \dim(V) - \dim(V_i)$.
    Thus,
    \begin{align*}
        \frac{1}{n} \sum_{i=1}^n \dim(\ker(\Enc_i) \cap A)
        &\geq \frac{1}{n} \sum_{i=1}^n \dim(\varphi(V_i)^\circ)
        = \frac{1}{n} \sum_{i=1}^n (\dim(V) - \dim(V_i)) \\
        &> \tau(r) \cdot \dim(V) \,,
    \end{align*}
    where the last inequality follows from $\Phi_V(V, (V_1,\dots, V_n), \tau(r)) < 0$.
    
    Now, consider the subspace $\wt{A} = \Enc(A) \subseteq \calC$ of dimension $\dim(\wt{A}) = \dim(V) \leq r$, and let $\wt{A}_i = \{a \in \wt{A} \mid a_i=0\}$ for $i\in[n]$.
    We have that $\dim(\wt{A}_i) = \dim(\ker(\Enc_i) \cap A)$, and $\frac{1}{n} \sum_{i=1}^n \dim(\wt{A}_i) > \tau(r) \cdot \dim (\tilde A)$, which shows that $\calC$ is not a $\tau$-subspace design code.
    
    For the other direction, if $\calC$ is not a $\tau$-subspace design code, then there exists a subspace $A \subseteq \F_q^k$ (in the message space) of dimension at most $r$ such that $\frac{1}{n}\sum_{i=1}^n \dim(A \cap \ker(\Enc_i)) > \tau(r)\dim(A)$.
 
    Let $V = A^\ast$, and for $i \in [n]$, let $V_i = (A \cap \ker(\Enc_i))^\perp$ (the annihilator of $A \cap \ker(\Enc_i)$).
    By definition, we have $V_i^\circ \subseteq \ker(\Enc_i)$.
    Moreover, since $\dim(V_i) = \dim(V) - \dim(A \cap \ker(\Enc_i))$,
    \begin{align*}
        \Phi_{V}(V, (V_1,\dots, V_n), \tau(r))
        & = \tau(r)\dim(V) - \frac{1}{n}\sum_{i=1}^n (\dim(V) - \dim(V_i))\\
        & = \tau(r)\dim(A) - \frac{1}{n}\sum_{i=1}^n \dim(A \cap \ker(\Enc_i)) < 0 \,.
    \end{align*}
    Therefore, $(V_1,\dots, V_n)$ is a $V$-local profile contained in $\calC$ with $\Phi_{V}(V, (V_1,\dots,V_n), \tau(r)) < 0$.
\end{proof}

By the ``if'' direction of \Cref{thm: local-ish for subspace-design}, we know that
if $\calC$ is \emph{not} a $\tau$-subspace design code, then $\calC$ contains a local profile $(V_1,\dots,V_n)$ with $\Phi(V, (V_1,\dots,V_n),\tau(r)) < 0$.
Using the fact that containment of a local profile is preserved under taking quotients (\Cref{lem: quotienting}), we can show a stronger statement:
if $\calC$ is \emph{not} a $\tau$-subspace design code, then $\calC$ contains a local profile $(V_1,\dots,V_n)$ with $\Phi(U, (V_1,\dots,V_n),\tau(r)) < 0$ for \emph{all} subspaces $\{0\} \neq U \subseteq V$.
This makes it easier to obtain a contradiction in our analysis in \Cref{sec:construction}.

\begin{lemma}\label{lem: too many constraints for all subspaces}
    Suppose an additive code $\calC=\{(\Enc_1(x),\dots, \Enc_n(x))\mid x\in \FF_q^k\}\subseteq \left(\FF_q^s\right)^n$ is \textbf{not} a $\tau$-subspace design. Then there exists a positive integer $r$, a linear space $V'$ of dimension at most $r$, along with a $V'$-local profile  $(V_1',\dots, V_n') \subseteq \mathcal L(V')^n$ contained in $\calC$, such that for all $\{0\}\neq U\subseteq V'$, we have $\Phi_{V'}(U, (V_1',\dots,V_n'), \tau(r))<0$.
\end{lemma}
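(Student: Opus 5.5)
Starting from the ``if'' direction of \Cref{thm: local-ish for subspace-design}, we obtain $r\in\N$, a space $V$ with $\dim V\le r$, and a $V$-local profile $(V_1,\dots,V_n)$ contained in $\calC$ with $\Phi_V(V,(V_1,\dots,V_n),\tau(r))<0$. The plan is to pick a maximal ``bad'' subspace to quotient out. Among all subspaces $W\subsetneq V$ (proper, including $W=\{0\}$), choose one minimizing $\Phi_V(W,(V_1,\dots,V_n),\tau(r))$; we need not break ties. Note $\Phi_V(\{0\},\cdot,\cdot)=0$, so the minimum over \emph{all} subspaces including $V$ itself is strictly negative. The cleaner choice is therefore this: let $W$ be a subspace of $V$ minimizing $\Phi_V(W,\ldots)$ among all subspaces of $V$, and among minimizers take one of \emph{maximum dimension}. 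Since $\Phi_V(V)<0=\Phi_V(\{0\})$, the minimum value is negative.

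The main step handles the degenerate case $W=V$. If $W=V$, there is no quotient to take, so we simply work with $V'=V$ and the original profile. For any $\{0\}\ne U\subseteq V$ we then want $\Phi_V(U)<0$. That does not follow from minimality alone, so the minimization must be set up more carefully. Take $W$ to be a \emph{minimum-dimensional} subspace among those attaining the minimum value $m<0$ of $\Phi_V$; then $W\neq\{0\}$. We set $V'=W$ and $V_i'=V_i\cap W$. A quick check shows $\Phi_{W}(U,(V_i\cap W),\alpha)=\Phi_V(U,(V_i),\alpha)$ for $U\subseteq W$, since $U\cap V_i\cap W=U\cap V_i$. Containment also transfers: restricting to a subspace of $V$ corresponds to a quotient of $A^*$, so it is a quotient of $A$ on the dual side. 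This is the dual of \Cref{lem: quotienting}, but it is not stated there, so the quotient direction is safer.

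Using the quotient direction, choose $W$ among proper subspaces attaining the minimum of $\Phi_V(W)$ over \emph{all} subspaces, taking the one of maximum dimension. Because $\Phi_V(V)<0=\Phi_V(\{0\})$, we have $W\ne V$ whenever $V$ itself is not the unique minimizer. If $V$ is a minimizer, set $W=$ a maximum-dimensional \emph{proper} minimizer of $\Phi_V(\cdot)-\Phi_V(V)$ restricted appropriately. Concretely: let $W$ be a maximal-dimension subspace with $\Phi_V(W)\le \Phi_V(X)$ for all $X$. If $W=V$, then for every proper $X$, $\Phi_V(X)\ge\Phi_V(V)$, and we instead take $W$ maximal with $\Phi_V(W)$ minimal among those with $\Phi_V(W) > \Phi_V(V)$...

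Cleanest final plan: let $W\subsetneq V$ be of maximum dimension among subspaces minimizing $\Phi_V(\cdot)$ over proper subspaces $W'$ with $\Phi_V(W')\ge\Phi_V(V)$... This becomes messy, so instead use the direct criterion. Let $W$ be a subspace maximizing $\Phi_V(W)$ with $W\subsetneq V$ of maximum dimension among maximizers; $\{0\}$ gives value $0$, so $\Phi_V(W)\ge0$. Apply \Cref{lem: quotienting} with $V'=V/W$, $V_i'=(V_i+W)/W$, and $M$ the quotient map. For $\{0\}\ne U'\subseteq V'$, write $U'=M(U)$ with $W\subsetneq U+W$. \Cref{thm: quotienting is good} gives
\[
\Phi_{V'}(U',(V_i'),\tau(r))=\Phi_V(U+W)-\Phi_V(W).
\]
If $U+W\ne V$, this is $<0$ by maximality and maximum dimension. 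If $U+W=V$, it equals $\Phi_V(V)-\Phi_V(W)<0-0\le 0$. Finally, $\dim V'\le r$, and $\tau(r)$ is the same parameter, so the claim holds. The main subtlety is choosing the maximizer with maximum dimension so that ties give strict inequality.
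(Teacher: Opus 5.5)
Your final plan is correct and is exactly the paper's proof. You take $W$ to be a maximum-dimensional maximizer of $\Phi_V(\cdot,(V_1,\dots,V_n),\tau(r))$, pass to the quotient profile $((V_i+W)/W)_i$ via \Cref{lem: quotienting}, and use \Cref{thm: quotienting is good} to get $\Phi_{V'}(U',(V_i')_i,\tau(r))=\Phi_V(U+W)-\Phi_V(W)<0$; restricting to proper $W$ changes nothing, since $\Phi_V(V)<0\le\Phi_V(\{0\})$ means $V$ is never a maximizer anyway.

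The earlier minimization and restriction attempts are unnecessary and should be cut (as you noticed, minimality alone does not force negativity on every nonzero $U\subseteq W$). Also, in your case $U+W=V$, the chain should read $\Phi_V(V)-\Phi_V(W)\le\Phi_V(V)<0$ rather than $<0-0\le 0$.
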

\begin{proof}
    By \Cref{thm: local-ish for subspace-design}, if the code $\calC$ is not a subspace design, then there exists a positive integer $r$, a vector space $V$ of dimension at most $r$ and a $V$-local profile $(V_1,\dots, V_n)$ contained in $\calC$ such that 
    \[\Phi_{V}(V, (V_1,\dots, V_n), \tau(r))<0 \,.\]

    Let $W \subseteq V$ be an element of $\argmax \Phi_{V}(\ast, (V_1,\dots, V_n), \tau(r))$ of maximal dimension.
    Note that $W \neq V$ since $\Phi_{V}(V, (V_1, \dots, V_n), \tau(r))<0$, but we know that the maximum value must be non-negative since $\Phi_V(\{0\},(V_1,\dots, V_n), \tau(r))=0$. 

    Now, by \Cref{lem: quotienting}, $\calC$ also contains the $V/W$-local profile $\left((V_1+W)/W,\dots, (V_n+W)/W\right)$.
    Let $V' \coloneqq V/W$ and $V'_i \coloneqq (V_i+W)/W$ for each $i\in[n]$.
    Let $M : V \to V/W$ be the linear map that maps $v$ to the coset $v+W$.
    Then, we have $\ker(M) = W$ and $M(V_i) = V_i'$.
    
    For any non-trivial subspace $U \subseteq V$ such that $U \not\subseteq W$, let $U' = M(U) = (U+W)/W \neq \{0\}$.
    Invoking \Cref{thm: quotienting is good} with $U$, $W=\ker(M)$, $V_1',\dots,V_n'$, and $M$, we get
    \begin{align*}
        &\Phi_{V'} \left(U', \left( V_i',\dots, V_n' \right), \tau(r)\right) \\
        & \qquad =\Phi_V(U+W, (V_1,\dots, V_n), \tau(r)) - \Phi_V(W, (V_1,\dots, V_n), \tau(r)) < 0 \,,
    \end{align*}
    where the strict inequality follows since $\dim(U+W) > \dim(W)$ and $W$ is a maximal element of $\argmax \Phi_{V}(\ast, (V_1,\dots, V_n), \tau(r))$.
    The above inequality holds for all $\{0\} \neq U' \subseteq V'$, which completes the proof.
\end{proof}

In the following, we give another characterization of subspace design codes.
Here, we have a code with message space $\F_q^k$, and we consider an auxiliary space $\F_q^{k'}$ together with a linear map $M : \F_q^{k'} \to \F_q^k$.
We are interested in $M(A)$ for subspaces $A \subseteq \F_q^{k'}$.
Looking ahead, in the AEL construction, this lemma will be applied to the inner code $\Cin$ with $M$ being the encoding map of the outer code restricted to a coordinate (where $k' = k_{\mathrm{out}}$ and $k = \kin$).

\begin{lemma}\label{lem: Enc_j(A) is probably 0}
    Let $\calC=\{(\Enc_1(x),\dots, \Enc_n(x))\mid x\in \FF_q^k\}\subseteq \left(\FF_q^s\right)^n$ be an $\FF_q$-additive code which is a $\tau$-subspace design code.
    Let $V$ be a vector space and $(V_1,\dots, V_n)$ be a $V$-local profile.
    Let $A\subseteq \FF_q^{k'}$ be a subspace, $M:\FF_q^{k'}\to \FF_q^{k}$ be a linear map, and $\varphi : V\to A^\ast$ be an isomorphism satisfying $\varphi(V_i)^\circ\subseteq \ker(\Enc_i \circ M)$ for each $i\in [n]$.
    Then, either $M(A) = \{0\}$, or for any positive integer $r\ge\dim(V)$, there exists a subspace $\{0\}\neq U\subseteq V$ such that
    \[\Phi_V(U, (V_1,\dots, V_n), \tau(r))\ge 0 \,. \]
\end{lemma}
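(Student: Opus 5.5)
Assume $M(A) \neq \{0\}$. The plan is to push the local profile through $M$, so that it becomes a local profile contained in $\calC$ itself, on a quotient of $V$. Then \Cref{thm: local-ish for subspace-design} gives nonnegative potential there, and \Cref{thm: quotienting is good} pulls this back to a nonzero subspace $U \subseteq V$.

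\textbf{Step 1: the quotient of $V$.} Let $K = A \cap \ker(M)$, so that $M|_A$ induces an isomorphism $A/K \cong M(A) =: B \subseteq \F_q^k$. Set $W = \varphi^{-1}(K^\perp) \subseteq V$, where $K^\perp \subseteq A^\ast$ is the annihilator of $K$ in $A$. Then $\dim(W) = \dim(V) - \dim(K)$, so $V/W$ has dimension $\dim(A) - \dim(K) = \dim(B) \geq 1$. Let $\pi \colon V \to V/W$ be the quotient map.

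\textbf{Step 2: the quotient profile is contained in $\calC$.} I want to show that $\calC$ contains the $(V/W)$-local profile $(\pi(V_1), \dots, \pi(V_n))$. First I need an isomorphism $\psi \colon V/W \to B^\ast$. For $f \in B^\ast$, the pullback $f \circ M|_A$ lies in $A^\ast$ and vanishes on $K$, so pullback is an isomorphism $B^\ast \cong K^\perp$.

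There is a mismatch here: $W$ corresponds to $K^\perp$, but I want to quotient by something whose complement plays the role of $B^\ast$. This suggests the roles are swapped. I will instead use the restriction trick of \Cref{lem: quotienting} in reverse and take a different $W$.

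Choose $W$ with $\varphi(W) = (A')^\perp$, where $A' \subseteq A$ is a complement of $K$ in $A$. Then $M|_{A'} \colon A' \to B$ is an isomorphism. Restriction to $A'$ gives $\varphi' \colon V/W \to A'^\ast$, exactly as in \Cref{lem: quotienting}. Composing with the dual of $(M|_{A'})^{-1}$ gives an isomorphism $\psi \colon V/W \to B^\ast$.

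Now check containment. Take $b = M(a')$ with $a' \in A'$ and $\psi(\pi(V_i))(b) = 0$. This says $\varphi(V_i)(a') = 0$. By hypothesis this gives $\Enc_i(M(a')) = 0$, that is, $\Enc_i(b) = 0$. So $\calC$ contains $(\pi(V_1), \dots, \pi(V_n))$, with $\dim(V/W) = \dim(B) \le \dim(V) \le r$.

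\textbf{Step 3: apply the characterization.} Since $\calC$ is a $\tau$-subspace design code, \Cref{thm: local-ish for subspace-design} gives
\[\Phi_{V/W}\bigl(V/W, (\pi(V_i))_i, \tau(r)\bigr) \ge 0.\]
\Cref{thm: quotienting is good}, applied with $U = V$ and kernel $W$, turns this into
\[\Phi_V(V, (V_i)_i, \tau(r)) - \Phi_V(W, (V_i)_i, \tau(r)) \ge 0.\]

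\textbf{Step 4: extract a nonzero $U$.} If $\Phi_V(W, \cdot) \ge 0$, then $\Phi_V(V, \cdot) \ge 0$ and $U = V$ works, since $V \neq \{0\}$. Otherwise $\Phi_V(W, \cdot) < 0$, and the inequality again gives $\Phi_V(V, \cdot) \ge \Phi_V(W, \cdot)$. That alone is not enough, so I handle this case by contradiction.

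Suppose every nonzero $U$ has negative potential; in particular $\Phi_V(W, \cdot) < 0$ whenever $W \neq \{0\}$. Then $\Phi_V(V, \cdot) \ge \Phi_V(W, \cdot)$ does not immediately contradict anything. The cleaner route is to apply the argument to the maximal-argmax quotient, as in \Cref{lem: too many constraints for all subspaces}. Take $W_0 \subseteq V$ maximizing $\Phi_V(\ast, \cdot)$, of maximal dimension; it is nonnegative since $\Phi_V(\{0\}) = 0$. If $W_0 \neq \{0\}$, then $U = W_0$ works. If $W_0 = \{0\}$, every nonzero $U$ has strictly negative potential, and this must contradict Step 3.

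\textbf{Main obstacle.} The contradiction needs $\Phi_V(W, \cdot) \ge 0$ in Step 3. I would choose $A'$ (hence $W$) to make this automatic, or run Steps 1–3 on sub-profiles: apply them to the restricted problem on $U$, using $\varphi|$ on quotients of $V$. Under the all-negative assumption this yields $\Phi_V(V) \ge \Phi_V(W) + 0$. Induction on $\dim(V)$ would then close the argument, by applying the statement to the profile on $V/(\text{smaller})$, or directly to $W$ when $M$ vanishes on its dual part.
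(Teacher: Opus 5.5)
Your Steps 2 and 3 are correct: with $\varphi(W)=(A')^\perp$, the quotient profile $(\pi(V_1),\dots,\pi(V_n))$ is contained in $\calC$, and \Cref{thm: quotienting is good} yields $\Phi_V(V,(V_i),\tau(r))\ge\Phi_V(W,(V_i),\tau(r))$. But this bounds a \emph{difference} of potentials. It cannot produce a single nonzero subspace of nonnegative potential, because both sides may be negative.

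Your claim in Step 4 that the all-negative case ``must contradict Step 3'' is unjustified. Step 3 gives a contradiction only when $W=\{0\}$, i.e.\ when $M|_A$ is injective. The failure is structural. Your $W$ is a complement of $U_0:=\varphi^{-1}(K^\perp)$, so $\pi|_{U_0}\colon U_0\to V/W$ is an isomorphism, and under it $\pi(V_i)\supseteq\pi(V_i\cap U_0)$. Since $\Phi$ is monotone in the $V_i$, nonnegativity of the quotient profile's potential is weaker than what the lemma asserts. The proposed fixes (choosing $A'$ cleverly, or induction on $\dim V$) are not arguments. To recurse on a proper subspace of $V$, you would first need some restricted profile to be contained in $\calC$, and that is exactly the missing step.

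The missing idea is to restrict rather than quotient, and your Step 1 object was already the right one. The ``mismatch'' came from a miscount: for $U_0=\varphi^{-1}(K^\perp)$ one has $\dim(V/U_0)=\dim K$, while $\dim U_0=\dim A-\dim K=\dim B\ge 1$. You already observed that pullback along $M|_A$ identifies $B^\ast$ with $K^\perp$. This gives an isomorphism $\varphi'\colon U_0\to B^\ast$, $\varphi'(u)(Ma)=\varphi(u)(a)$, well defined since $\varphi(u)$ kills $K$.

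Now show that $\calC$ contains the \emph{intersection} profile $(V_1\cap U_0,\dots,V_n\cap U_0)$. If $b=Ma$ with $a\in A$ and $\varphi'(V_i\cap U_0)(b)=0$, then
\[ a\in\varphi(V_i\cap U_0)^\circ=\varphi(V_i)^\circ+\varphi(U_0)^\circ=\varphi(V_i)^\circ+K, \]
hence $b\in M(\varphi(V_i)^\circ)\subseteq\ker(\Enc_i)$. Applying \Cref{thm: local-ish for subspace-design} to this $U_0$-profile (with $\dim U_0\le\dim V\le r$) gives $\Phi_{U_0}(U_0,(V_i\cap U_0),\tau(r))\ge 0$. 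Because $U_0\cap(V_i\cap U_0)=U_0\cap V_i$, this is literally $\Phi_V(U_0,(V_1,\dots,V_n),\tau(r))\ge 0$ with $U_0\neq\{0\}$. This is the paper's proof, and it needs no argmax or induction step.
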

\begin{proof}
    Let $B = M(A) \subseteq \FF_q^{k}$, and assume that $B\ne \{0\}$.
    We will consider the restriction of $M$ to $A$, which we denote as $M_A: A \to B$.
    Note that we have $\dim(A) = \dim(B) + \dim(\ker(M_A))$, thus the annihilator $\ker(M_A)^\perp \subseteq A^\ast$ has $\dim(\ker(M_A)^\perp) = \dim(A) - \dim(\ker(M_A)) = \dim(B)$, which is nonzero.
    
    Let $U \coloneqq \varphi^{-1}(\ker(M_A)^\perp) \subseteq V$.
    We claim that in fact $\calC$ contains the $U$-local profile $(V_1 \cap U,\dots, V_n \cap U)$.
    Assuming this, the lemma follows by invoking \Cref{thm: local-ish for subspace-design}:
    since $U$ is a vector space of dimension $0 < \dim(U) \leq \dim(A) \leq r$, and $\calC$ contains $(V_1 \cap U,\dots, V_n \cap U)$, \Cref{thm: local-ish for subspace-design} states that
    \begin{align*}
        \Phi_V(U, (V_1 \cap U, \dots, V_n \cap U), \tau(r)) = \Phi_V(U, (V_1,\dots,V_n), \tau(r)) \geq 0 \,.
    \end{align*}
    
    To prove the claim, we first need to define an isomorphism $\varphi': U \to B^\ast$.
    For $u\in U$, we define $\varphi'(u) \in B^\ast$ as follows:
    for any $b\in B$,
    \[\varphi'(u)(b) = \varphi(u)(a) \]
    where $a \in A$ is any element satisfying $M_A a = b$.
    This is well-defined because if $a\neq a'$ have $M_A a = M_A a' = b$, then $a-a' \in \ker(M_A)$ and $\varphi(u) (a-a') = 0$ due to $\varphi(u) \in \ker(M_A)^\perp$.
    The fact that $\varphi'$ is an isomorphism is straightforward given that $\dim(U) = \dim(B) = \dim(B^\ast)$ and that $\varphi'(u) = 0 \implies \varphi(u) = 0 \implies u=0$.

    Therefore, we have a vector space $U$, a subspace $B \subseteq \F_q^k$, and an isomorphism $\varphi': U \to B^\ast$.
    To show that $\calC$ contains the $U$-local profile $(V_1\cap U,\dots, V_n \cap U)$, it suffices to show that $\varphi'(V_i \cap U)^\circ \subseteq \ker(\Enc_i)$, that is, any $b \in B$ with $\varphi'(V_i \cap U)(b) = 0$ satisfies $\Enc_i(b) = 0$.
    
    For such $b$, by definition of $\varphi'$, we have $\varphi (V_i\cap U) (a)=0$ for all $a \in A$ such that $M_A a = b$.
    Denoting $M_A^{-1} b \coloneqq \{a \in A \mid M_A a = b\}$, we have
    \begin{align*}
        M_A^{-1}b & \subseteq  \varphi(V_i \cap U)^\circ
        = \varphi(V_i)^\circ +\varphi(U)^\circ
        = \varphi(V_i)^\circ + \ker(M_A)
        \subseteq \ker(\Enc_i \circ M) + \ker(M_A) \,,
    \end{align*}
    where we use that $\varphi(U)^\circ = (\ker(M_A)^\perp)^\circ = \ker(M_A)$, and the assumption $\varphi(V_i)^\circ \subseteq \ker(\Enc_i \circ M)$.
    Therefore, we have $b \in M_A( \ker(\Enc_i \circ M) + \ker(M_A) ) \subseteq \ker(\Enc_i)$, which implies that $\Enc_i(b) = 0$ as desired.
\end{proof}

\section{Our construction}
\label{sec:construction}

We first state a simple lemma about bipartite spectral expanders.

\begin{lemma} \label{lem:EML}
    Let $G = ([n], [n],E)$ be a $d$-regular bipartite $\lambda$-expander.
    For any $x \in \R^n$, let $\mu = \frac{1}{n} \sum_{i=1}^n x_i$, and let $y \in \R^n$ be such that $y_j = \frac{1}{d} \sum_{i \in N(j)} x_i$ for each $j \in [n]$.
    Then,
    \begin{align*}
        \frac{1}{n} \sum_{j=1}^n (y_j-\mu)^2 \leq \lambda^2 \cdot \|x\|_\infty^2 \,.
    \end{align*}
\end{lemma}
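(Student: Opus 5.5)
The plan is to prove this spectrally, by decomposing $x$ into its mean and a mean-zero part. Write $A$ for the normalized biadjacency matrix of $G$ from \Cref{def:expander}, oriented so that $y = A^\top x$, i.e., $y_j = \frac{1}{d}\sum_{i\in N(j)} x_i$. Since $G$ is $d$-regular on both sides, the all-ones vector $\mathbf{1}$ satisfies $A\mathbf{1} = \mathbf{1}$ and $A^\top \mathbf{1} = \mathbf{1}$. Thus $\mathbf{1}/\sqrt{n}$ is a top left and right singular vector with singular value $1$. Every other singular value is at most $\sigma_2(A) \le \lambda$.

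Next, decompose $x = \mu \mathbf{1} + x^\perp$, where $x^\perp \perp \mathbf{1}$. Then $y = \mu \mathbf{1} + A^\top x^\perp$. We will check that $A^\top x^\perp \perp \mathbf{1}$: indeed $\langle A^\top x^\perp, \mathbf{1}\rangle = \langle x^\perp, A\mathbf{1}\rangle = \langle x^\perp,\mathbf{1}\rangle = 0$. So $y - \mu\mathbf{1} = A^\top x^\perp$. Because $A^\top$ restricted to $\mathbf{1}^\perp$ has operator norm at most $\sigma_2(A)\le\lambda$ (the singular value decomposition of $A^\top$ shares its values with that of $A$, with the top pair removed), we get
\[
\sum_{j}(y_j-\mu)^2 = \|A^\top x^\perp\|_2^2 \le \lambda^2 \|x^\perp\|_2^2 .
\]

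To finish, we bound $\|x^\perp\|_2^2 = \sum_i (x_i-\mu)^2 = \sum_i x_i^2 - n\mu^2 \le \sum_i x_i^2 \le n\|x\|_\infty^2$. Dividing by $n$ gives the claimed inequality.

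The only point requiring care is the justification that $A^\top$ maps $\mathbf{1}^\perp$ into $\mathbf{1}^\perp$ with norm at most $\lambda$. This follows from the singular value decomposition, since the top singular pair is exactly $(\mathbf{1}/\sqrt n,\mathbf{1}/\sqrt n)$ by regularity. The remaining steps are routine.
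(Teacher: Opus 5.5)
Your proof is correct and follows essentially the same route as the paper. It subtracts the mean, uses that $\mathbf{1}$ is fixed by the (transposed) normalized biadjacency matrix so that $y-\mu\mathbf{1}$ is the image of $x-\mu\mathbf{1}\perp\mathbf{1}$, applies the bound $\sigma_2(A)\le\lambda$ on $\mathbf{1}^\perp$, and finishes with $\frac{1}{n}\|x-\mu\mathbf{1}\|_2^2=\frac{1}{n}\|x\|_2^2-\mu^2\le\|x\|_\infty^2$; your explicit handling of $A$ versus $A^\top$ and of why the restriction to $\mathbf{1}^\perp$ has norm at most $\sigma_2(A)$ is slightly more careful than the paper's write-up, but the argument is the same.
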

\begin{proof}
    We can write $y$ as $A x$, where $A \in \R^{n \times n}$ is the normalized adjacency matrix of $G$ with $A_{i,j} = 1/d$ if $(i,j) \in E(G)$ and $0$ otherwise.
    Then,
    \begin{align*}
        \frac{1}{n} \sum_{j=1}^n \inparen{y_j - \mu}^2
        = \frac{1}{n} \norm{A x - \mu \cdot \mathbf{1} }_2^2
        = \frac{1}{n} \norm{A \inparen{x - \mu \cdot \mathbf{1}} }_2^2 \,.
    \end{align*}
    Observe that $x- \mu \cdot \mathbf{1}$ is orthogonal to $\mathbf{1}$.
    Since $G$ is a $\lambda$-expander, we have $\|Av\|_2^2 \leq \lambda^2 \|v\|_2^2$ for any vector $v \perp \mathbf{1}$.
    Thus, $\frac{1}{n} \norm{A \inparen{x - \mu \cdot \mathbf{1}} }_2^2 \leq \frac{1}{n} \lambda^2 \|x-\mu \cdot \mathbf{1}\|_2^2 \leq \lambda^2 \|x\|_\infty^2$, where we use that $\frac{1}{n} \|x-\mu \mathbf{1}\|_2^2 = \frac{1}{n} \|x\|_2^2 - \mu^2 \leq \|x\|_{\infty}^2$.
\end{proof}

\parhead{Our main theorem.}
We show that the AEL construction (\Cref{def:AEL}) inherits the subspace design property of the inner codes, as long as the outer code has constant distance and the graph is a sufficiently good expander.

\begin{theorem} \label{thm:main-AEL-theorem}
    Let $\tau: \N \to \R_{\le 1}$, $\eps > 0$, and $r \in \N$.
    Suppose
    \begin{itemize}
        \item $\Cout$ is a code with $\Eout : \F_q^k \to (\F_q^{\kin})^n$ and has relative distance $\delta_{\out} > 0$,
        \item $\Cin$ is a code with $\Ein : \F_q^{\kin} \to (\F_q^s)^d$ and is a $\tau$-subspace design code,
        \item $G = (\Vleft,\Vright,E)$ is a $d$-regular bipartite $\lambda$-expander with $|\Vleft| = |\Vright| = n$ and suppose $\lambda < \eps q^{-r^2/2} \sqrt{\delta_{\out}}$.
    \end{itemize}
    Then, $\calC = \CAEL(\Cout, \Cin, G) \subseteq ((\F_q^s)^d)^n$ a $\wt{\tau}$-subspace design code where $\wt{\tau}(r') = \tau(r')+\eps$ for all $r' \leq r$, and $\wt{\tau}(r') = 1$ otherwise.
\end{theorem}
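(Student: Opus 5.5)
We argue by contradiction, using the characterizations from \Cref{sec:characterization}. Suppose $\calC$ is not a $\wt{\tau}$-subspace design. Since $\wt\tau(r')=1$ for $r'>r$ makes the condition trivial there, \Cref{lem: too many constraints for all subspaces} gives some $r'\le r$, a space $V$ with $\dim V\le r'$, and a $V$-local profile $(V_1,\dots,V_n)$ (indexed by $\Vright$) contained in $\calC$. This profile comes with $A\subseteq\F_q^k$ and an isomorphism $\varphi:V\to A^\ast$, and satisfies $\Phi_V(U,(V_j)_j,\tau(r')+\eps)<0$ for every nonzero $U\subseteq V$. Write $h_j(U)=\dim U-\dim(U\cap V_j)$. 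The negativity condition then says that for every $U\neq\{0\}$, the average $\frac1n\sum_j h_j(U)$ exceeds $(\tau(r')+\eps)\dim U$.

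The first step is to transfer the profile to the left side. Fix $i\in\Vleft$ with neighbors $N_1(i),\dots,N_d(i)$. The $t$-th inner symbol of $\Ein(\Eoutj{i}(x))$ is a coordinate of $z_{N_t(i)}$. So for $a\in\varphi(V_{N_t(i)})^\circ$ we get $\Einl{t}(\Eoutj{i}(a))=0$. The hypotheses of \Cref{lem: Enc_j(A) is probably 0} therefore hold, with code $\Cin$, map $M=\Eoutj{i}$, the same $A$ and $\varphi$, and the $d$-tuple $(V_{N_1(i)},\dots,V_{N_d(i)})$. The lemma yields a dichotomy: either $\Eoutj{i}(A)=0$, or some nonzero $U\subseteq V$ satisfies $\frac1d\sum_t h_{N_t(i)}(U)\le\tau(r')\dim U$, i.e.\ $\Phi\ge0$ locally at $\tau(r')$.

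The second step uses expansion. There are at most $q^{r^2}$ subspaces $U\subseteq V$, since the number of subspaces of an $r$-dimensional space is at most $q^{r^2}$. For each fixed $U\ne0$, apply \Cref{lem:EML} to $x_j=h_j(U)/\dim U\in[0,1]$. The local averages $y_i$ then satisfy $\frac1n\sum_i(y_i-\mu_U)^2\le\lambda^2$, where $\mu_U>\tau(r')+\eps$. The bad left vertices for $U$ are those with $y_i\le\tau(r')$, so $|y_i-\mu_U|\ge\eps$ there. By Chebyshev/Markov, the fraction of such $i$ is at most $\lambda^2/\eps^2$. A union bound over all $U$ shows that the fraction of left vertices admitting any such $U$ is at most $q^{r^2}\lambda^2/\eps^2<\delta_{\out}$, using the hypothesis $\lambda<\eps q^{-r^2/2}\sqrt{\delta_{\out}}$.

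The third step derives the contradiction. By the dichotomy, every remaining vertex $i$ (more than a $1-\delta_{\out}$ fraction) has $\Eoutj{i}(A)=0$. Take any nonzero $a\in A$; this is possible since $\dim A=\dim V\ge1$, and $V\ne0$ because $\Phi<0$ forces $\Phi_V(V,\cdot)<0$. Then $\Eout(a)$ is a nonzero outer codeword vanishing on more than a $1-\delta_{\out}$ fraction of coordinates, which contradicts the distance. We need $\Eout$ injective, which is implied by positive distance.

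The main obstacle is the bookkeeping in the first step. One must check that placing inner symbols on edges matches the profile indices correctly, so that $\varphi(V_{N_t(i)})^\circ\subseteq\ker(\Einl{t}\circ\Eoutj{i})$ really holds. The counting bound on subspaces, which accounts for the $q^{r^2}$ factor, also needs care.
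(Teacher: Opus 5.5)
Your proposal is correct and follows the paper's proof essentially step for step. You use \Cref{lem: too many constraints for all subspaces} to get a profile whose potential is negative on every nonzero $U$, transfer it to each left vertex and apply \Cref{lem: Enc_j(A) is probably 0}, then combine \Cref{lem:EML}, Markov, and a union bound over at most $q^{r^2}$ subspaces to force $\Eoutj{i}(A)=0$ on more than a $1-\delta_{\out}$ fraction of left vertices, contradicting the outer distance. Your explicit tracking of the failing dimension $r'\le r$, using $\tau(r')$ both globally and locally, is slightly more careful than the paper's write-up, which uses $\tau(r)$ throughout.
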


\begin{proof}
    We first set a few notations.
    The vertices $i\in \Vright$ and $j\in \Vleft$ are naturally associated with the set $[n]$.
    We write $\Eoutj{j}: \F_q^k \to \F_q^{\kin}$ to denote the outer encoder restricted to a coordinate $j\in \Vleft$.
    Similarly, $\Einl{\ell}: \F_q^{\kin} \to \F_q^s$ for the inner code restricted to $\ell \in [d]$, and
    $\EAELi{i}: \F_q^k \to (\F_q^s)^d$ for the final encoder restricted to $i \in \Vright$.
    
    For contradiction, suppose $\calC$ is \emph{not} a $\wt{\tau}$-subspace design code.
    For simplicity, we denote $\tau' = \wt{\tau}(r) = \tau(r) +\eps$.
    By \Cref{lem: too many constraints for all subspaces}, there exist a linear space $V$ of dimension at most $r$ and subspaces $V_1,\dots,V_n \subseteq V$, such that $\calC$ contains the $V$-local profile $(V_1,\dots,V_n)$, and
    \begin{align}
        \Phi_V(U, (V_1,\dots,V_n), \tau')
        = \frac{1}{n} \sum_{i=1}^n \dim (U \cap V_i) - (1-\tau') \dim(U)
        < 0 \,,
        \label{eq:bound-on-Phi}
    \end{align}
    for all subspaces $\{0\} \neq U \subseteq V$.

    Recall that $\calC$ containing $(V_1,\dots,V_n)$ means that there is a subspace $A \subseteq \F_q^k$ (in the message space) and an isomorphism $\varphi: V \to A^*$ (where $A^*$ is the dual space of $A$) such that $\varphi(V_i)^{\circ} \subseteq \ker(\EAELi{i})$ for all $i\in[n]$.
    In other words, any vector $a \in A$ with $\varphi(V_i) (a) = 0$ satisfies $\EAELi{i}(a) = 0$.

    Recall that a codeword can be viewed as a regrouping of $\Ein(\Eoutj{j}(a)) \in (\F_q^s)^d$ for $j\in \Vleft$.
    For a vertex $j\in \Vleft$ and its $d$ neighbors $N_1(j), \dots, N_d(j) \in \Vright$, we have that $\EAELi{N_\ell(j)}(a) = 0$ implies that $\Einl{\ell}(\Eoutj{j}(a)) = 0$.
    Therefore, $\varphi(V_{N_\ell(j)})^{\circ} \subseteq \ker(\Einl{\ell} \circ \Eoutj{j})$ for each neighbor indexed by $\ell \in [d]$.

    Next, we use the $\tau$-subspace design property of $\Cin$.
    Invoking \Cref{lem: Enc_j(A) is probably 0} with $k' = k_{\out}$, $k=\kin$, the linear map $M = \Eoutj{j} : \F_q^{k_{\out}} \to \F_q^{\kin}$, and local profile $(V_{N_1(j)},\dots,V_{N_d(j)})$, it follows that either
    \begin{enumerate}[(1)]
        \item $\Eoutj{j}(A) = 0$, or 
        \item for $r = \dim(V)$, there exists a non-trivial subspace $U \subseteq V$ such that
        \begin{align*}
            \Phi \inparen{U, (V_{N_1(j)},\dots,V_{N_d(j)}), \tau(r)} \geq 0 \,.
        \end{align*}
    \end{enumerate}
    The rest of the proof goes by showing that more than $1-\delta_{\out}$ fraction of $j\in \Vleft$ has $\Eoutj{j}(A) = 0$, hence contradicting the distance $\delta_{\out}$ of the outer code $\calC_{\out}$.
    To do so, we will show that $\frac{1}{d} \sum_{\ell=1}^d \dim(U \cap V_{N_{\ell}(j)})$ is close to $\frac{1}{n} \sum_{i=1}^n \dim(U \cap V_i)$ for most $j$ and most subspaces $U$.

    Fix a non-trivial subspace $U \subseteq V$.
    Define vectors $x, y \in \R^n$ to be such that $x_i \coloneqq \dim(U \cap V_i)$ and $y_j \coloneqq \frac{1}{d} \sum_{\ell=1}^d \dim(U \cap V_{N_{\ell}(j)}) 
        = \frac{1}{d} \sum_{\ell=1}^d x_{N_{\ell}(j)}$.
    Moreover, let $\mu \coloneqq \frac{1}{n} \sum_{i=1}^n x_i$,
    and note that $x_i \leq \dim(U)$.
    Then, by \Cref{lem:EML}, we have
    \begin{align*}
        \frac{1}{n} \sum_{j=1}^n (y_j-\mu)^2 \leq \lambda^2 \cdot \dim(U)^2 \,.
    \end{align*}
    By Markov's inequality, we have that
    \begin{align*}
        \Pr_{j\sim [n]} \insquare{|y_j - \mu| \geq \eps \cdot \dim(U)} \leq \frac{\lambda^2}{\eps^2} \,.
    \end{align*}

    On the other hand, we know from \Cref{eq:bound-on-Phi} that $\mu < (1-\tau') \dim(U)$.
    Thus, if $|y_j-\mu| \leq \eps \cdot \dim(U)$, then $y_j < (1-\tau' + \eps) \dim(U)$.
    Since $\tau' = \tau(r)+\eps$, we have
    \begin{align*}
        \Phi \inparen{U, (V_{N_1(j)},\dots,V_{N_d(j)}), \tau(r)}
        = y_j - (1-\tau(r)) \dim(U) < 0 \,.
    \end{align*}
    Taking a union bound over $\leq q^{r^2}$ many non-trivial subspaces $U \subseteq V$, we have that $1- \frac{\lambda^2}{\eps^2} q^{r^2}$ fraction of $j\in \Vleft$ satisfies $\Phi(U, (V_{N_1(j)},\dots,V_{N_d(j)}), \tau(r)) < 0$ for all non-trivial subspaces $U \subseteq V$.
    For such $j\in \Vleft$, we must have $\Eoutj{j}(A) = 0$.

    Suppose $\lambda < \eps q^{-r^2/2} \sqrt{\delta_{\out}}$, then $>1-\delta_{\out}$ fraction of $j\in \Vleft$ has $\Eoutj{j}(A) = 0$, which contradicts that $\calC_{\out}$ has distance $\delta_{\out}$.
    This completes the proof.
\end{proof}

From \Cref{thm:main-AEL-theorem}, \Cref{thm:subspace-design} follows almost immediately.

\begin{proof}[Proof of \Cref{thm:subspace-design}]
    Given $r\in \N$ and $R,\eps \in (0,1)$,
    we instantiate the ingredients in \Cref{thm:main-AEL-theorem} as follows:
    \begin{itemize}
        \item Outer code $\Cout$:
        We choose an explicit additive code with rate $R_{\out} = 1-\eps/4$ and relative distance $\delta_{\out} = \eps/8$.
        
        \item Inner code $\Cin$:
        We set $s = \Theta(r/\eps)$.
        By \Cref{thm: Folded RLCs are subspace-designs}, there exists a $\tau_{\mathrm{in}}$-subspace design code in $(\F_q^s)^d$ of rate $R_{\mathrm{in}}$ with $\tau_{\mathrm{in}}(r') \leq R_{\mathrm{in}} + \eps/4$ for all $r' \leq \eps s/16 \leq r$.
        Since $s,d$ are constants, we can find such a code by brute force.

        \item Graph $G$:
        We use the explicit construction from \Cref{thm:expander-construction}, where $\lambda \leq 2/\sqrt{d}$ (as long as $n \geq n_0(d)$ is sufficiently large).
        We choose $d = \frac{\Theta(1)}{\eps^2 \delta_{\out}} q^{r^2} = \poly(1/\eps) \cdot q^{r^2}$.
    \end{itemize}
    By choosing $R_{\mathrm{in}}$ to be $R+\eps/2$, we get that the rate of $\CAEL$ is $R_{\out} R_{\mathrm{in}} \geq R$.
    The alphabet is $\F_q^{sd} = \F_q^{\poly(r,1/\eps) q^{r^2}}$.
    By \Cref{thm:main-AEL-theorem} (with $\tau_{\mathrm{in}}$ and $\eps/4$), our code is a $\tau$-subspace design code with $\tau(r') = \tau_{\mathrm{in}}(r')+\eps/4 \leq R_{\mathrm{in}} + \eps/2 \leq R+\eps$ for all $r' \leq r$.
    This completes the proof of \Cref{thm:main-AEL-theorem}.
\end{proof}

\section{Consequences for list-decoding, recovery and curve-decoding}
\label{sec:consequences}

In this section, we state the list-decoding, list-recovery, and curve-decoding parameters of our construction.

\begin{definition}
    Let $C_{q, R, r, \epsilon, n}$ be our explicit $\FF_q$-additive code of rate $R$ over alphabet $\Sigma = \FF_q^{\poly(r,1/\epsilon)\cdot q^{r^2}}$ which is a $\tau$-subspace design with $\tau(r')\le R+\epsilon$ for all $r'\le r$.
\end{definition}

We note that our constructions exist for each choice of parameters by \Cref{thm:subspace-design}.
Our construction has not been explicitly defined for any of the below properties, but it naturally captures them due to the strength of subspace designs.

\begin{theorem}[Average radius list-decoding]
    For any finite field $\FF_q$, $R, \epsilon \in (0,1)$ and positive integer $L$, the $\F_q$-additive code $C_{q, R, L, \epsilon, n} \subseteq \Sigma^n$, of rate $R$ and alphabet $\Sigma = \FF_q^{\poly(L/\epsilon)\cdot q^{L^2}}$, has the property that for any distinct $c_0,\dots, c_L \in C_{q, R, L, \epsilon, n}$ and $y\in \Sigma^n$, we have \[\sum_{i=0}^L \Delta(y, c_i)\ge L(1-R-\epsilon) \ . \]
\end{theorem}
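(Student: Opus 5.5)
This is a direct combination of \Cref{thm:subspace-design} and \Cref{thm: list-decoding}. The code $C_{q,R,L,\epsilon,n}$ is obtained from \Cref{thm:subspace-design} with parameter $r = L$. Its alphabet is therefore $\Sigma = \FF_q^{\poly(L/\epsilon)\cdot q^{L^2}}$, matching the statement. It is a $\tau$-subspace design code with $\tau(r') \le R+\epsilon$ for every $r' \le L$; here we use, as in the Remark on the parameter $\tau$, that $\tau$ may be taken non-decreasing and that the bound in \Cref{thm:subspace-design} holds for every subspace of dimension at most $r$.

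Next we apply \Cref{thm: list-decoding} to the $L+1$ distinct codewords $c_0,\dots,c_L$, i.e.\ with its parameter $r$ set to $L+1$. This gives
\[
\sum_{i=0}^{L} \Delta(y,c_i) \;\ge\; L\bigl(1-\tau(L)\bigr).
\]
Since $\tau(L) \le R+\epsilon$, the right-hand side is at least $L(1-R-\epsilon)$, which is the claim.

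The only point to check is the index alignment. \Cref{thm: list-decoding} with $r+1$ codewords needs the subspace design bound only at dimension $r$, because the differences $c_i - c_0$ span a subspace of dimension at most $L$. So subspace-design control up to dimension $L$, which is exactly what $C_{q,R,L,\epsilon,n}$ provides, suffices. No further obstacle arises, since both ingredients are stated earlier and we use them as black boxes.
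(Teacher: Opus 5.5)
Your proposal is correct and matches the paper's proof. The paper likewise observes that $C_{q,R,L,\epsilon,n}$ is a $\tau$-subspace design code with $\tau(L)\le R+\epsilon$, then applies \Cref{thm: list-decoding} to the $L+1$ codewords to get the bound $L(1-\tau(L))\ge L(1-R-\epsilon)$.
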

\begin{proof}
    $C_{q, R, L, \epsilon, n} \subseteq \Sigma^n$ is a $\tau$-subspace design code with $\tau(L)\le R+\epsilon$. Now, we can apply \Cref{thm: list-decoding}.
\end{proof}

\begin{theorem}[List-recovery]
    For any finite field $\FF_q$, $R,\epsilon_0 ,\epsilon_1 \in (0,1)$ and positive integer $\ell$, the $\F_q$-additive code $C_{q, R, \lceil\ell/\epsilon_1\rceil, \epsilon_0, n} \subseteq \Sigma^n$, of rate $R$ and alphabet $\Sigma = \FF_q^{\poly(\ell, 1/\epsilon_1, 1/\epsilon_0)\cdot q^{\lceil\ell/\epsilon_1\rceil^2}}$,  has the following list-recovery property: for any $L_1,\dots, L_n\subseteq \Sigma$ with $|L_i|=\ell$ for each $i\in [n]$,  \[\left|\left\{c\in \calC \mid \Delta(c_i,L_1\times L_2\times \cdots\times L_n)<1-R-\epsilon_0-\epsilon_1 \right\}\right|\le \left(\frac{\ell}{R+\epsilon_0+\epsilon_1}\right)^{(R+\epsilon_0+\epsilon_1)/\epsilon_1} \ . \]
\end{theorem}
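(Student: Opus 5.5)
The plan is to imitate the proof of the average-radius list-decoding theorem just above: identify the subspace-design parameter of the code at the relevant dimension and then apply the list-recovery reduction, \Cref{thm: list-recovery}.

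Set $r = \lceil \ell/\epsilon_1 \rceil$. By \Cref{thm:subspace-design} and the definition of $C_{q,R,r,\epsilon_0,n}$, the code $C_{q, R, \lceil\ell/\epsilon_1\rceil, \epsilon_0, n}$ is a $\tau$-subspace design code with $\tau(r') \le R+\epsilon_0$ for all $r' \le r$. In particular, $\tau(\lceil \ell/\epsilon_1\rceil) \le R+\epsilon_0$. Its alphabet is $\F_q^{\poly(r,1/\epsilon_0)\cdot q^{r^2}}$, and since $r = \lceil\ell/\epsilon_1\rceil$ this matches the claimed alphabet $\F_q^{\poly(\ell,1/\epsilon_1,1/\epsilon_0)\cdot q^{\lceil \ell/\epsilon_1\rceil^2}}$.

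We then invoke \Cref{thm: list-recovery} with $\epsilon = \epsilon_1$, writing $t := \tau(\lceil \ell/\epsilon_1\rceil)$. It gives that for any lists $L_1,\dots,L_n$ of size $\ell$, the number of codewords at distance less than $1 - t - \epsilon_1$ from $L_1\times\cdots\times L_n$ is at most $\bigl(\ell/(t+\epsilon_1)\bigr)^{(t+\epsilon_1)/\epsilon_1}$.

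It remains to replace $t$ by its upper bound $R+\epsilon_0$, which requires two monotonicity checks.
\begin{enumerate}[(1)]
\item The radius only shrinks. Since $t \le R+\epsilon_0$, we have $1-R-\epsilon_0-\epsilon_1 \le 1-t-\epsilon_1$. Hence every codeword counted in the target statement is also counted by \Cref{thm: list-recovery}.
\item The list-size bound only grows. Put $x = t+\epsilon_1$, so that $x \le x' := R+\epsilon_0+\epsilon_1$, and consider $f(x) = (\ell/x)^{x/\epsilon_1}$. We have $\ln f(x) = (x/\epsilon_1)(\ln \ell - \ln x)$, whose derivative is $(\ln \ell - \ln x - 1)/\epsilon_1$. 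This is nonnegative whenever $x \le \ell/e$, so in that regime $f(x) \le f(x')$, giving exactly the claimed bound.
\end{enumerate}

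The main obstacle is the second check, which can fail when $\ell$ is small relative to $x'$ (for instance $\ell=1$). There I would first try to avoid monotonicity entirely. If the lists are allowed to be padded, one can instead rerun the \cite{BCZ25} argument with the weaker hypothesis $\tau \le R+\epsilon_0$ in place of $\tau$, since that proof only uses an upper bound on $\tau$. Alternatively, I would argue directly that the bound of \Cref{thm: list-recovery} holds with any upper bound on $\tau(\lceil\ell/\epsilon_1\rceil)$ substituted. Apart from this point, the proof is a direct substitution.
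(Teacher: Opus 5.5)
Your setup is fine, but the main route through monotonicity of $f(x)=(\ell/x)^{x/\epsilon_1}$ has a real hole, exactly where you suspect. $f$ is decreasing for $x>\ell/e$. So for small $\ell$ (e.g.\ $\ell\in\{1,2\}$) with $R+\epsilon_0+\epsilon_1>\ell/e$, the bound that \Cref{thm: list-recovery} gives at the true value $t=\tau(\lceil\ell/\epsilon_1\rceil)$ can be strictly larger than the claimed one. For instance, $\ell=1$, $\epsilon_1=0.1$, $t=0.4$, $R+\epsilon_0=0.5$ gives $f(0.5)=2^5=32$ versus $f(0.6)=(5/3)^6\approx 21.4$. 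So the stated list-size bound does not follow from the bound at $t$. Your fallbacks (padding the lists, rerunning the \cite{BCZ25} argument) are left unspecified, so as written the proof is incomplete in that regime.

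The missing observation is a one-liner, and it is exactly the paper's proof. \Cref{def: Subspace-designs} imposes only an upper bound, so a $\tau$-subspace design code is also a $\tau'$-subspace design code for any pointwise larger $\tau'$. Take $\tau'(r)=\max(R+\epsilon_0,\tau(r))$. Since $\tau(r)\le R+\epsilon_0$ for $r\le\lceil\ell/\epsilon_1\rceil$, you get $\tau'(\lceil\ell/\epsilon_1\rceil)=R+\epsilon_0$ exactly. Applying \Cref{thm: list-recovery} to $\tau'$ then yields precisely the radius $1-R-\epsilon_0-\epsilon_1$ and the list bound $\bigl(\ell/(R+\epsilon_0+\epsilon_1)\bigr)^{(R+\epsilon_0+\epsilon_1)/\epsilon_1}$ for every $\ell$. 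Neither of your two monotonicity checks is needed. (If $R+\epsilon_0+\epsilon_1>1$ the radius is negative and the statement is vacuous, so the constraint $\tau'\le 1$ causes no trouble.) Your second alternative points at this, but it should be the main argument, with this justification, rather than a contingency.
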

\begin{proof}
    $C_{q, R, \lceil\ell/\epsilon_1\rceil, \epsilon_0, n} \subseteq \Sigma^n$ is a $\tau$-subspace design code with $\tau(\lceil \ell/\epsilon_1\rceil)\le R+\epsilon_0$. Now, we can apply \Cref{thm: list-recovery}. Note that this is possible since the code $C$ we construct is thus also a $\tau'$subspace design with $\tau'(r)=\max(R+\eps_0, \tau(r))$ for all $r$.
\end{proof}

\begin{corollary}
    For any finite field $\FF_q$, $R,\epsilon \in (0,1)$ and positive integer $\ell\ge 2$, there exists a choice of $\epsilon_0$ and $\epsilon_1 \in (0,1)$ with $\epsilon_0 + \epsilon_1 = \epsilon$, such that the $\F_q$-additive code $C_{q, R, \lceil\ell/\epsilon_1\rceil, \epsilon_0, n} \subseteq \Sigma^n$ of rate $R$ and alphabet $\Sigma = \FF_q^{\poly(\ell, 1/\epsilon)\cdot q^{4\ell^2/\epsilon^2}}$ has the following list-recovery property:  for any $L_1,\dots, L_n\subseteq \Sigma$ with $|L_i|=\ell$ for each $i\in [n]$,  \[\left|\left\{c\in \calC \mid \Delta(c_i,L_1\times L_2\times \cdots\times L_n)<1-R-\epsilon\right\}\right|\le \left\lceil\left(\frac{\ell}{R+\epsilon}\right)^{(R+\epsilon)/\epsilon}\right\rceil \ . \]
\end{corollary}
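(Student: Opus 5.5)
The plan is to apply the preceding list-recovery theorem with a specific split of $\epsilon$, rather than reproving anything. That theorem, with parameters $\epsilon_0,\epsilon_1$, yields list size $\left(\frac{\ell}{R+\epsilon_0+\epsilon_1}\right)^{(R+\epsilon_0+\epsilon_1)/\epsilon_1}$ at radius $1-R-\epsilon_0-\epsilon_1$. Once we impose $\epsilon_0+\epsilon_1=\epsilon$, the radius is exactly $1-R-\epsilon$ and the base becomes $\frac{\ell}{R+\epsilon}$, which is the form in the corollary. What remains is to choose the split so that the exponent $(R+\epsilon)/\epsilon_1$ is close enough to $(R+\epsilon)/\epsilon$ that, after taking the integer ceiling, the bound matches $\lceil (\ell/(R+\epsilon))^{(R+\epsilon)/\epsilon}\rceil$. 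We also need to check the alphabet claim.

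First I handle the alphabet. Taking $\epsilon_1 \ge \epsilon/2$ gives $\lceil \ell/\epsilon_1\rceil \le \lceil 2\ell/\epsilon\rceil \le 2\ell/\epsilon + 1$. Squaring this exponent bound produces $4\ell^2/\epsilon^2$ plus lower-order terms of size $O(\ell/\epsilon)$. Those lower-order terms in the exponent of $q$ can be absorbed into the $\poly(\ell,1/\epsilon)$ prefactor only when $q$ is treated as a constant, so strictly I should either restrict to the regime $\epsilon_1$ slightly above $\epsilon/2$ with $\lceil \ell/\epsilon_1\rceil \le 2\ell/\epsilon$, or accept a constant-factor slack in the exponent. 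The factor $\poly(\ell,1/\epsilon_1,1/\epsilon_0)$ is $\poly(\ell,1/\epsilon)$ provided $\epsilon_0$ is at least inverse-polynomial in $\ell/\epsilon$.

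The key step, and the main obstacle, is the list-size bound. Write $B=\frac{\ell}{R+\epsilon}$ and $e=(R+\epsilon)/\epsilon$. The list size is an integer, so it suffices that $B^{(R+\epsilon)/\epsilon_1} < \lfloor B^{e}\rfloor + 1$. I would choose $\epsilon_0 = \eta$ and $\epsilon_1=\epsilon-\eta$ with $\eta>0$ small. As $\eta \to 0$ the bound decreases continuously to $B^e$, strictly from above when $B>1$ (this holds since $\ell\ge 2 > R+\epsilon$). Because the floor value $\lfloor B^{e}\rfloor+1$ is strictly larger than $B^e$, continuity gives some $\eta>0$ with $B^{e\epsilon/(\epsilon-\eta)} < \lfloor B^e\rfloor + 1$. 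Hence the integer list size is at most $\lfloor B^e\rfloor \le \lceil B^e\rceil$.

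The difficulty is quantitative. The gap $\lfloor B^e\rfloor+1-B^e$ can be tiny, which could force $\eta$ to be tiny, and the alphabet factor $\poly(1/\epsilon_0)$ then depends on how close $B^e$ is to an integer. I would therefore quantify $\eta$ by requiring $B^{e}\cdot(B^{e\eta/(\epsilon-\eta)}-1) < \lfloor B^e\rfloor+1-B^e$, and then, to keep the alphabet clean, check whether $B^e$ is rational with denominator controlled by $\ell$ and $\epsilon$. If it is not, I fall back on the statement's weaker reading: the corollary only asserts existence of some split, and since $\epsilon_0$ depends only on $(\ell,R,\epsilon)$, the resulting alphabet is still a constant independent of $n$.
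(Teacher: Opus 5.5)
Your overall plan (apply the preceding list-recovery theorem with $\epsilon_0$ tiny and $\epsilon_1=\epsilon-\epsilon_0$) is the paper's plan. As written, however, it does not prove the corollary, because it never delivers the claimed alphabet $\FF_q^{\poly(\ell,1/\epsilon)\cdot q^{4\ell^2/\epsilon^2}}$. The culprit is the integer target you aim for. Requiring $B^{(R+\epsilon)/\epsilon_1}<\lfloor B^e\rfloor+1$ amounts to proving the stronger bound $\lfloor B^e\rfloor$. The slack $\lfloor B^e\rfloor+1-B^e$ for that target has no lower bound in terms of $\ell$ and $\epsilon$: as $R$ varies, $B^e=(\ell/(R+\epsilon))^{(R+\epsilon)/\epsilon}$ moves continuously and comes arbitrarily close to integers from below. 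So your $\eta=\epsilon_0$, and with it the factor $\poly(1/\epsilon_0)$ in the alphabet exponent, is not bounded by any function of $(q,\ell,\epsilon)$. Checking rationality of $B^e$ cannot repair this. The fallback reading (``some alphabet independent of $n$'') is strictly weaker than what the statement asserts.

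The missing observation is that the target is a ceiling, and this gives slack that does not depend on the fractional part of $B^e$. The list size $N$ is an integer with $N\le B^{(R+\epsilon)/\epsilon_1}$, so $N\le\lceil B^e\rceil$ as soon as $B^{(R+\epsilon)/\epsilon_1}<\lceil B^e\rceil+1$. Since $\lceil B^e\rceil\ge B^e$, it suffices that $B^{(R+\epsilon)/\epsilon_1}\le B^e+\tfrac12$.

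This is the paper's route. Take $L=\lceil B^e\rceil$, $\epsilon_0=\epsilon^2/(4L\log(\ell/\epsilon))$ and $\epsilon_1=\epsilon-\epsilon_0\ge\epsilon/2$. Using $\log=\log_2$ and $R+\epsilon\le 1$ (the only nontrivial case), one gets $B^{(R+\epsilon)/\epsilon_1}/B^e=B^{(R+\epsilon)\epsilon_0/(\epsilon\epsilon_1)}\le(\ell/\epsilon)^{2\epsilon_0/\epsilon^2}=2^{1/(2L)}\le 1+\tfrac{1}{2L}$. Hence $B^{(R+\epsilon)/\epsilon_1}\le B^e+\tfrac{B^e}{2L}\le B^e+\tfrac12$.

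This $\epsilon_0$ is of order $1/L$, exponentially small in $1/\epsilon$, so it fails your ``inverse-polynomial'' criterion for the alphabet. It is nevertheless harmless, for two reasons. First, $\log_q\poly(1/\epsilon_0)=O(\log_q L+\log_q(\ell/\epsilon))=O(\epsilon^{-1}\log(\ell/\epsilon))$. Second, $L\ge 2$ gives $\epsilon_0<\epsilon/8$, so $\epsilon_1>7\epsilon/8$ and $\lceil\ell/\epsilon_1\rceil^2\le(8\ell/(7\epsilon)+1)^2$. That leaves room of order $\ell^2/\epsilon^2$ below $4\ell^2/\epsilon^2$ in the exponent of $q$, which dominates the lower-order term. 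Your concern that $(2\ell/\epsilon+1)^2$ exceeds $4\ell^2/\epsilon^2$ arose only from using the crude bound $\epsilon_1\ge\epsilon/2$ in the alphabet estimate.
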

\begin{proof}
    Let $L =\left\lceil\left(\frac{\ell}{R+\epsilon}\right)^{(R+\epsilon)/\epsilon}\right\rceil$. Consider $\epsilon_0 = \frac{\epsilon^2}{4L\log(\ell/\epsilon)}$ and $\epsilon_1 = \epsilon - \epsilon_0$.

    Thus, $C_{q, R, \lceil\ell/\epsilon_1\rceil, \epsilon_0, n} \subseteq \Sigma^n$, is an $\FF_q$-additive code of rate $R$, alphabet $\Sigma = \FF_q^{\poly(\ell, 1/\epsilon_1, 1/\epsilon_0)\cdot q^{\lceil\ell/\epsilon\rceil^2}} = \FF_q^{\poly(\ell, 1/\epsilon)\cdot \poly(1/\epsilon_0)\cdot q^{\lceil\ell/\epsilon\rceil^2}}$ and 
    \[ \left|\left\{c\in \calC \mid \Delta(c_i,L_1\times L_2\times \cdots\times L_n)<1-R-\epsilon\right\}\right|\le \left\lfloor\left(\frac{\ell}{R+\epsilon}\right)^{(R+\epsilon)/\epsilon_1}\right\rfloor . \]
Therefore, it suffices to prove that:
    \begin{itemize}
        \item $\poly(1/\epsilon_0)\cdot q^{\lceil\ell/\epsilon\rceil^2}\le \poly(\ell/\epsilon)\cdot q^{4\ell^2/\epsilon^2}$; and
        \item $\left(\frac{\ell}{R+\epsilon}\right)^{(R+\epsilon)/\epsilon_1}\leq  \left(\frac{\ell}{R+\epsilon}\right)^{(R+\epsilon)/\epsilon}+1/2$ \ .
    \end{itemize}

    For the first, observe that it suffices to prove that $\log L + \lceil\ell/\epsilon\rceil^2 \le 4\ell^2/\epsilon^2$ but $\log L  =\frac{R+\epsilon}{\epsilon}\cdot \log_q(\ell/(R+\epsilon))\le \ell/\eps\cdot \ell/\epsilon\le \ell^2\epsilon^2$ and thus, this part follows.

    Now, for the second part we have that: 
    \begin{align*}
        \left(\frac{\ell}{R+\epsilon}\right)^{(R+\epsilon)/\epsilon_1}& = \left(\frac{\ell}{R+\epsilon}\right)^{(R+\epsilon)(1/\epsilon+1/\epsilon_1-1/\epsilon)}\\
         & = \left(\frac{\ell}{R+\epsilon}\right)^{(R+\epsilon)/\epsilon}\cdot \left(\frac{\ell}{R+\epsilon}\right)^{(R+\epsilon)(\epsilon_0/\epsilon\cdot \epsilon_1)} \quad (\epsilon_1 \ge \epsilon/2)\\
         &\leq \left(\frac{\ell}{R+\epsilon}\right)^{(R+\epsilon)/\epsilon}\cdot (\ell/\epsilon)^{1/2L\log (\ell/\epsilon)}\\
         & \leq \left(\frac{\ell}{R+\epsilon}\right)^{(R+\epsilon)/\epsilon}\cdot 2^{1/2L}\\
         & \leq\left(\frac{\ell}{R+\epsilon}\right)^{(R+\epsilon)/\epsilon}(1+1/2L)\\
         &\leq \left(\frac{\ell}{R+\epsilon}\right)^{(R+\epsilon)/\epsilon}+1/2
    \end{align*}
    as desired.
\end{proof}
\begin{theorem}[Curve-decoding]
\label{thm: curve-decoding-main}
    For any finite field $\FF_q$, $R, \epsilon_0, \epsilon_1 \in (0,1)$ and positive integers $\ell\ge 2, a$, the $\F_q$-additive code $C_{q, R, \lceil(\ell+1)/\epsilon_1\rceil, \epsilon_0, n} \subseteq \Sigma^n$ of rate $R$, and alphabet $\Sigma = \FF_q^{\poly(\ell, 1/\epsilon_0,1/\epsilon_1)\cdot q^{{\lceil(\ell+1)/\epsilon_1\rceil}^2}}$, is $(\ell, 1-R-\epsilon_0-\epsilon_1, a, a\cdot \frac{\epsilon_1}{\lceil(\ell+1)/\epsilon_1\rceil+\epsilon_1}) $ curve-decodable.
\end{theorem}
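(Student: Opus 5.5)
The plan is to follow the pattern of the preceding list-decoding and list-recovery consequences: a direct application of the curve-decoding theorem for subspace design codes (\Cref{thm: curve-decoding}) to our construction. Set $r \coloneqq \lceil(\ell+1)/\epsilon_1\rceil$. By definition, $C_{q,R,r,\epsilon_0,n}$ is a $\tau$-subspace design code with $\tau(r')\le R+\epsilon_0$ for all $r'\le r$, and in particular $\tau(r)\le R+\epsilon_0$. As in the list-recovery proof, I would first pass to the modified function $\tau'(r'')=\max(R+\epsilon_0,\tau(r''))$ if needed. The code is trivially also a $\tau'$-subspace design (the condition only gets weaker), and $\tau'(r) = R+\epsilon_0$ exactly. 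The alphabet size $\F_q^{\poly(\ell,1/\epsilon_0,1/\epsilon_1)\cdot q^{r^2}}$ then comes directly from \Cref{thm:subspace-design} with this $r$, since $\poly(r,1/\epsilon_0)=\poly(\ell,1/\epsilon_0,1/\epsilon_1)$.

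Next, I would check the hypothesis of \Cref{thm: curve-decoding}, namely that $\epsilon_1 \ge (\ell+1)/r$. This holds because $r=\lceil(\ell+1)/\epsilon_1\rceil\ge(\ell+1)/\epsilon_1$. Applying \Cref{thm: curve-decoding} with parameters $r,\ell,a$ and $\epsilon=\epsilon_1$ gives $(\ell,\,1-\tau'(r)-\epsilon_1,\,a,\,\frac{\epsilon_1}{r+\epsilon_1}a)$ curve-decodability. Substituting $\tau'(r)=R+\epsilon_0$ yields exactly radius $1-R-\epsilon_0-\epsilon_1$ and threshold $a\cdot\frac{\epsilon_1}{\lceil(\ell+1)/\epsilon_1\rceil+\epsilon_1}$, as claimed.

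The only subtlety I expect is a monotonicity issue. If we only know $\tau(r)\le R+\epsilon_0$ rather than equality, curve-decodability at a larger radius $1-\tau(r)-\epsilon_1$ must be shown to imply it at the smaller radius $1-R-\epsilon_0-\epsilon_1$. This is immediate from \Cref{def: curve-decodability}: shrinking $\delta$ only shrinks the set $A$, so the hypothesis $|A|\ge a$ at the smaller radius implies it at the larger radius, and the resulting $c_0,\dots,c_\ell$ agree with $f$ on at least $b$ points of the larger $A$. That, however, is not quite what is required, since the agreement points must lie in the smaller $A$. For this reason I would rely on the $\tau'$ replacement, which sidesteps the issue cleanly by making $\tau'(r)$ exactly $R+\epsilon_0$.
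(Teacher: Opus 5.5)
Your proposal is correct and is the paper's argument: take $r=\lceil(\ell+1)/\epsilon_1\rceil$, use $\tau(r)\le R+\epsilon_0$, and apply \Cref{thm: curve-decoding} with $\epsilon=\epsilon_1$. The paper's proof is a one-liner that leaves implicit both the check $\epsilon_1\ge(\ell+1)/r$ and the replacement $\tau'=\max(R+\epsilon_0,\tau)$; it spells out that replacement only in the list-recovery theorem, so your write-up just makes these details explicit.
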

\begin{proof}
    $C_{q, R, \lceil (\ell+1)/\epsilon_1\rceil, \epsilon_0, n} \subseteq \Sigma^n$ is a $\tau$-subspace design code with $\tau(\lceil (\ell+1)/\epsilon_1\rceil)\le R+\epsilon_0$. Now, we can apply \Cref{thm: curve-decoding}.
\end{proof}

By setting $\eps_0 = \eps_1 = \eps/2$ in \Cref{thm: curve-decoding-main}, we get the following corollary.

\begin{corollary}
    For any finite field $\FF_q$, $R, \epsilon \in (0,1)$ and positive integers $\ell\ge 2, a$, the $\F_q$-additive code $C_{q, R, 2\lceil(\ell+1)/\epsilon\rceil, \epsilon/2, n} \subseteq \Sigma^n$ of rate $R$, and alphabet $\Sigma = \FF_q^{\poly(\ell, 1/\epsilon)\cdot q^{4{\lceil(\ell+1)/\epsilon\rceil}^2}}$, is $(\ell, 1-R-\epsilon, a, \Omega(a\cdot \eps^2/\ell)) $ curve-decodable.
\end{corollary}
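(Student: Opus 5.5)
This corollary follows by specializing \Cref{thm: curve-decoding-main} to $\eps_0=\eps_1=\eps/2$ and simplifying the resulting parameters. Taking $\eps_1=\eps/2$ gives
\[
\lceil(\ell+1)/\epsilon_1\rceil=\lceil 2(\ell+1)/\epsilon\rceil \le 2\lceil(\ell+1)/\epsilon\rceil .
\]
The code in the corollary is $C_{q,R,2\lceil(\ell+1)/\epsilon\rceil,\epsilon/2,n}$. By definition it is a $\tau$-subspace design with $\tau(r')\le R+\eps/2$ for every $r'\le 2\lceil(\ell+1)/\eps\rceil$. In particular this holds at $r'=\lceil 2(\ell+1)/\eps\rceil$.

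Rather than quoting \Cref{thm: curve-decoding-main} verbatim, whose first index differs by rounding, I would rerun its one-line proof. I would invoke \Cref{thm: curve-decoding} with $r=\lceil 2(\ell+1)/\eps\rceil$ and slack $\eps/2$. Its hypothesis $\eps/2\ge (\ell+1)/r$ holds because $r\ge 2(\ell+1)/\eps$. It gives $(\ell, 1-\tau(r)-\eps/2, a, \frac{\eps/2}{r+\eps/2}a)$ curve-decodability. Since $1-\tau(r)-\eps/2\ge 1-R-\eps$, and curve-decodability at a given radius implies it at any smaller radius, we get the stated radius $1-R-\eps$.

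The alphabet bound follows by substitution. The construction's alphabet is $\F_q^{\poly(r,1/\eps)q^{r^2}}$ with $r=2\lceil(\ell+1)/\eps\rceil$. This exponent is $\poly(\ell,1/\eps)\cdot q^{4\lceil(\ell+1)/\eps\rceil^2}$, as claimed.

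For the agreement parameter, we have $r+\eps/2\le 2(\ell+1)/\eps+2 = O(\ell/\eps)$, using $\ell\ge 2$ and $\eps<1$. Hence
\[
\frac{\eps/2}{r+\eps/2}\, a = \Omega\!\left(a\cdot \frac{\eps^2}{\ell}\right).
\]

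No step is a real obstacle. The only care needed is keeping the rounding consistent, so that the first index of the code is at least the $r$ used in \Cref{thm: curve-decoding}; the factor-$2$ ceiling in the corollary's index guarantees this.
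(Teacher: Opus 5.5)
Your route is the paper's: set $\eps_0=\eps_1=\eps/2$ in \Cref{thm: curve-decoding-main}. You also fill in a rounding detail the paper's one-line derivation skips, namely $\lceil 2(\ell+1)/\eps\rceil\le 2\lceil(\ell+1)/\eps\rceil$. This ensures that the code indexed by $2\lceil(\ell+1)/\eps\rceil$ still has $\tau(r)\le R+\eps/2$ at the $r$ you feed into \Cref{thm: curve-decoding}. Your checks of the hypothesis $\eps/2\ge(\ell+1)/r$, the alphabet size, and the $\Omega(a\eps^2/\ell)$ agreement bound are all fine.

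One justification needs replacing. You claim that curve-decodability at radius $\delta$ implies it at every smaller radius, but this is not immediate from \Cref{def: curve-decodability}. When $\delta$ shrinks, the set $A$ shrinks with it, and the conclusion counts only agreements inside $A$. So the $b$ agreements guaranteed at the larger radius need not lie in the smaller set.

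The clean fix is to use monotonicity in $\tau$ rather than in the radius. A $\tau$-subspace design code is also a $\tau'$-subspace design code for $\tau'(r')=\max(\tau(r'),R+\eps/2)$, and here $\tau'(r)=R+\eps/2$ exactly. Applying \Cref{thm: curve-decoding} with $\tau'$ therefore lands exactly at radius $1-R-\eps$, with no radius comparison needed. This is the same device the paper uses explicitly in its list-recovery proof, and implicitly in \Cref{thm: curve-decoding-main}.

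Radius-monotonicity can also be rescued for this particular code. Redefine $f$ outside the smaller set to be codewords at distance more than $\delta$ from the curve point; these exist because, by \Cref{thm: list-decoding}, only $O(1/\eps)$ codewords lie within distance $\delta$ of any word. That detour is unnecessary given the $\tau'$ argument.
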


\section{Acknowledgments}

Part of this done was done while R.G. was visiting V.G. at the Simons Institute for the Theory of Computing. We thank Josh Brakensiek, Yeyuan Chen and Zihan Zhang for various discussions related to the recent progress on local properties of RLCs, and Zihan in particular for a chat about AEL and subspace designs.
We also thank Nikhil Shagrithaya for comments on the paper and discussions about \cite{jeronimo2025AEL}.

V.G. is supported by a Simons Investigator award, NSF grant CCF-2211972, and ONR grant N00014-24-1-2491.

R.G. is supported by (Yael Tauman Kalai’s) grant from  Defense Advanced Research Projects Agency
(DARPA) under Contract No. HR0011-25-C-0300. Any opinions, findings and conclusions or recommendations expressed in this material are those of the author(s) and do not necessarily reflect the views of the Defense Advanced Research Projects Agency (DARPA).

\addcontentsline{toc}{section}{References}
{\small \bibliographystyle{alpha}
   \bibliography{subspace-design}
}
\end{document}